\newcommand{\FFsymb}{{\sf FF}}
\newcommand{\FFpsymb}[1]{\FFsymb\kern-0.152em_{#1}}
\newcommand{\FFp}[2]{\FFpsymb{#1}(#2)}
\newcommand{\FFcansymb}{\FFpsymb\binfcanindex}
\newcommand{\FFcan}[1]{\FFp\binfcanindex #1}
\newcommand{\NFFsymb}{\mbox{$\not\kern-0.152em{\sf FF}$}}
\newcommand{\NFFpsymb}[1]{\NFFsymb\kern-0.152em_{#1}}
\newcommand{\NFFp}[2]{\NFFpsymb{#1}(#2)}
\newcommand{\NFFcansymb}{\NFFpsymb\binfcanindex}
\newcommand{\NFFcan}[1]{\NFFp\binfcanindex #1}
\newcommand{\NFPsymb}{\mbox{$\not\kern-0.152em{\sf FPP}$}}
\newcommand{\finfinsymb}{\mathsf{finfin}}
\newcommand{\finfin}[1]{\finfinsymb(#1)}
\newcommand{\inffinsymb}{\mathsf{inffin}}
\newcommand{\inffin}[1]{\inffinsymb(#1)}
\newcommand{\finfinmemsymb}{\mathsf{finfinext}}
\newcommand{\finfinmem}[1]{\finfinmemsymb(#1)}
\newcommand{\inffinmemsymb}{\mathsf{inffinext}}
\newcommand{\inffinmem}[1]{\inffinmemsymb(#1)}
\newcommand{\allinfmemsymb}{\mathsf{nofinext}}
\newcommand{\allinfmem}[1]{\allinfmemsymb(#1)}
\newcommand{\exfinmemsymb}{\mathsf{exfinext}}
\newcommand{\exfinmem}[1]{\exfinmemsymb(#1)}
\newcommand{\memsymb}{{\sf mem}} 
\newcommand{\mem}[2]{\memsymb(#1,#2)}
\newcommand{\inhabproblem}{\mathsf{INHAB}}
\newcommand{\finhabproblem}{\mathsf{FINHAB}}
\newcommand{\inhab}[2]{{\mathcal I}(#1,#2)}
\newcommand{\cntsymb}{\#}
\newcommand{\cnt}[1]{\cntsymb(#1)}
\newcommand{\tnat}{{\mathbb N}}  
\newcommand{\finextsymb}{{\mathcal{E}_\mathrm{fin}}} 
\newcommand{\finext}[1]{\finextsymb(#1)}
\newcommand{\powerfin}[1]{\mathcal{P}_\mathsf{fin}(#1)}
\newcommand{\setoffinmemsymb}{\mathcal{C}}
\newcommand{\setoffinmem}[1]{\setoffinmemsymb(#1)}
\newcommand{\solletter}{\mathcal S}
\newcommand{\solfunction}[1]{{\solletter}(#1)}
\newcommand{\sol}[2]{{\solletter}(\seq{#1}{#2})}
\newcommand{\vect}[1]{\overrightarrow{#1}} 
\newcommand{\impl}{\supset}
\newcommand{\cosign}{\textit{co}}
\newcommand{\fix}{{\nu}}  
\newcommand{\gfpsymb}{\mathsf{gfp}}
\newcommand{\gfp}{\gfpsymb\kern0.1em}  
\newcommand{\oo}{\mathbb{O}}
\newcommand{\lb}{\lambda}
\newcommand{\RIntro}{\textit{RIntro}}
\newcommand{\LVecIntro}{\textit{LVecIntro}}
\newcommand{\Alts}{\textit{Alts}}
\newcommand{\forsomej}{}
\newcommand{\ol}{{\lambda}}
\newcommand{\cool}{{\lambda}^\cosign}
\newcommand{\coolfs}{{\lambda}^\cosign_{\Sigma}}
\newcommand{\coolfsh}{\mathsf{H}{\lambda}^\cosign_{\Sigma}}
\newcommand{\olfsfix}{{\lambda}^\gfpsymb_{\Sigma}}
\newcommand{\olfsfixh}{\mathsf{H}\lambda^\gfpsymb_{\Sigma}}
\newcommand{\seqt}[3]{#1\vdash #2:#3}
\newcommand{\seq}[2]{#1\Rightarrow #2}
\newcommand{\tuple}[1]{\langle #1 \rangle}
\newcommand{\fl}[2]{\langle #2\rangle_{#1}}
\newcommand{\ns}[2]{#1+\cdots+#2} 
\newcommand{\s}[2]{\sum\limits_{#1}^{}{#2}} 
\newcommand{\fs}[2]{\s{#1}{#2}}
\newcommand{\interp}[2]{[\![#1]\!]_{#2}}
\newcommand{\interpwe}[1]{[\![#1]\!]}
\newcommand{\interps}[1]{[\![#1]\!]^s}
\newcommand{\finrepsymb}{{\mathcal F}}
\newcommand{\finrep}[2]{\finrepsymb(#1;#2)}
\newcommand{\finrepempty}[1]{\finrepsymb(#1)}
\newcommand{\colbase}{{\sf mem}}
\newcommand{\colr}[2]{\colbase(#1,#2)}
\newcommand{\colebase}{\colbase}
\newcommand{\colra}[2]{\colebase(#1,#2)}
\newtheorem{lemma}{Lemma}  
\newtheorem{theorem}[lemma]{Theorem}  
\newtheorem{definition}[lemma]{Definition} 
\newtheorem{proposition}[lemma]{Proposition}
\newtheorem{corollary}[lemma]{Corollary}
\newcommand{\infinitesymb}{{\sf inf}}
\newcommand{\infinite}[1]{\infinitesymb(#1)}
\newcommand{\allinfsymb}{{\sf nofin}}
\newcommand{\allinf}[1]{\allinfsymb(#1)}
\newcommand{\allinfa}[1]{\allinfsymb(#1)}
\newcommand{\exfinsymb}{{\sf exfin}}
\newcommand{\exfin}[1]{\exfinsymb(#1)}
\newcommand{\binfsymb}{\mbox{$\not\kern-0.152em{\sf EF}$}}         
\newcommand{\binf}[1]{\binfsymb (#1)}
\newcommand{\binfpsymb}[1]{\binfsymb\kern-0.152em_{#1}} 
\newcommand{\binfp}[2]{\binfpsymb{#1} (#2)}
\newcommand{\binfcanindex}{\star}
\newcommand{\binfcansymb}{\binfpsymb\binfcanindex}
\newcommand{\binfcan}[1]{\binfp\binfcanindex{#1}}
\newcommand{\nbinfsymb}{\mathsf{EF}}    
\newcommand{\nbinf}[1]{\nbinfsymb (#1)}
\newcommand{\nbinfpsymb}[1]{\nbinfsymb\kern-0.152em_{#1}} 
\newcommand{\nbinfp}[2]{\nbinfpsymb{#1} (#2)}
\newcommand{\nbinfcansymb}{\nbinfpsymb\binfcanindex}
\newcommand{\nbinfcan}[1]{\nbinfp\binfcanindex{#1}}
\newcommand{\ndinfsymb}{\not\kern-0.152em\diamond_{\infty}}
\newcommand{\FPV}{\mathit{FPV}}
\newcommand{\dom}{\mathit{dom}}
\newcommand{\power}[1]{\mathcal{P}(#1)}
\newcommand{\dualsymb}{\dagger}
\newcommand{\Mcal}{{\mathcal M}}
\newcommand{\Ncal}{{\mathcal N}}
\title{Inhabitation in Simply-Typed Lambda-Calculus\\ through a Lambda-Calculus for Proof Search}
\author{Jos\'{e} Esp\'{\i}rito Santo, Ralph Matthes\thanks{This work was partially supported by the project \emph{Climt}, ANR-11-BS02-016, of the French Agence Nationale de la Recherche.}, Lu\'{\i}s Pinto}
\begin{document}

\maketitle

\begin{abstract}

A new, comprehensive approach to inhabitation problems in simply-typed lambda-calculus is shown, dealing with both decision and counting problems. This approach works by exploiting a representation of the search space generated by a given inhabitation problem, which is in terms of a lambda-calculus for proof search that the authors developed recently. The representation may be seen as extending the Curry-Howard representation of proofs by lambda-terms, staying within the methods of lambda-calculus and type systems. Our methodology reveals inductive descriptions of the decision problems, driven by the syntax of the proof-search expressions, and the end products are simple, recursive decision procedures and counting functions.

 \end{abstract}


\section{Introduction}\label{sec:intro}

In this paper we study inhabitation problems in the simply-typed $\lb$-calculus, by which we mean both decision problems, like ``does type $A$ have an inhabitant?'', and related questions like counting or listing the inhabitants of a type known to have finitely many of them~\cite{HindleyBasicSimple}. We propose a new approach based on a $\lambda$-calculus for proof search that the authors developed recently~\cite{FICS2013,proofsearch-arXiv}. This is a $\lb$-calculus with fixed-points and formal sums, here named $\olfsfix$, able to represent as a single term the entire space generated by the search for inhabitants for a given type.

Our previous work showed the correctness of this representation. We could add that such representation has a special status: it was derived as an inductive, finitary counterpart to the coinductive characterization of the search process; and the latter is a rather natural (we might say canonical) mathematical definition of the process which, in addition, may be seen as extending the Curry-Howard paradigm of representation, from proofs to runs of search processes (all this will be recalled in Section~\ref{sec:background}). Furthermore, the finitary representation stays within the methods of $\lambda$-calculus and type systems, which dispenses us from importing and adapting methods from other areas, like automata and language theory or games~\cite{TakahashiAH96,BourreauS11,SchubertDB15}, or from creating new representations like in the proof-tree method~\cite{BrodaD05,AlvesB15}.

Despite these formal merits, the applicability of the finitary representation remains to be illustrated. This is the purpose of the present paper. Consider a decision problem~$D$ and let $A$ be a type (of simply-typed $\lambda$-calculus). Our previous work allowed us (i) to express $D(A)$  as $P(S_A)$, where $S_A$ is the coinductive description of the search for inhabitants of $A$, and $P$ is some coinductive predicate, and then (ii) to convert to the equivalent $P'(F_A)$, where $F_A$ is the finitary description of $S_A$ and $P'$ is still a predicate defined by reference to coinductive structures. The form $P'(F_A)$ does not yet profit from the finitary description. This is what we achieve in the present paper: one obtains the equivalent $P''(F_A)$, where $P''$ is inductive, actually directed by the syntax of the finitary description and, for this reason, immediately decidable (however, this might call, at the leaves of the inductive structure, another decidable predicate, whose decidability has been established before with the same method). 

We illustrate in Section~\ref{sec:inhab} the methodology with two decision problems: the problem exemplified above and also ``does type $A$ have finitely many inhabitants?''. Next, in Section~\ref{sec:counting}, we study types $A$ with finitely many inhabitants to show how their number can be calculated from the finitary description $F_A$ as a maybe amazingly simple recursive function.

To sum up: we can base a new methodology to study inhabitation problems on the finitary representation offered by $\olfsfix$ which (i) is aligned with the Curry-Howard isomorphism; (ii) enjoys economy of means, as it finds resources in the area of $\lb$-calculus; (iii) is modular, as it separates the problems of representing the search space (that exploits the subformula property of the object $\lambda$-calculus) from the problem of analyzing it (where types play a minor role beyond being part of the annotations of the deployed $\lambda$-calculus for the analysis); (iv) produces algorithms and functions of high simplicity and even beauty.


\section{Background}\label{sec:background}

This section has four subsections. First we fix our presentation of the simply-typed $\lb$-calculus, next we recall our two representations of proof search, developed before in \cite{FICS2013,proofsearch-arXiv}, and recast here as search for inhabitants of a given type. 
Finally, we start introducing new notions needed in this paper. 

\subsection{Simply-typed $\lb$-calculus}
We lay out a presentation of the simply-typed $\lb$-calculus, a system we often refer to by $\lb$.

Simple types (or simply, types) are given by the grammar:
$$
\begin{array}{lcrcl}
(types)&&A,B,C&:=&p\mid A\impl B
\end{array}
$$
where $p,q,r$ range over \emph{atoms}. We thus do not distinguish types from propositional implicational formulas. We will write $A_1\impl A_2\impl
\cdots\impl A_k\impl p$, with $k\geq0$, in vectorial notation as $\vec{A}\impl p$. For example, if the vector $\vec{A}$ is empty the notation means simply $p$.

Normal (\emph{i.e.}, $\beta$-normal) $\lb$-terms are given by:
$$
\begin{array}{lcrcl}
\textrm{(terms)} &  & t,u & ::= & \,\lambda x^A.t \mid x\,\tuple{t_1,\ldots,t_k}\\
\end{array}
$$
where a countably infinite set of variables, ranged over by letters
$x$, $y$, $w$, $z$, is assumed. Note that in $\lambda$-abstractions we
adopt a \emph{domain-full} presentation (a.\,k.\,a.~Church-style
syntax), annotating the bound variable with a formula. As is common-place with lambda-calculi, we will throughout identify terms up to $\alpha$-equivalence.

As always, we permanently need access to the head variable of a non-abstraction. To this end, we are using an
informal notation, with \emph{vectors} written $\tuple{t_1,\ldots,t_k}$ (meaning $\tuple{}$ if
$k=0$), abbreviated $\tuple{t_i}_i$ if there is no ambiguity on
the range of indices\footnote{If we formalized vectors as a separate syntactic class, with a nil vector and a vector constructor, we would get $\overline{\lb}$-terms \cite{HerbelinCSL94} and would fall, logically, in a sequent calculus format, as in \cite{HerbelinCSL94,proofsearch-arXiv}. But even in \cite{proofsearch-arXiv}, despite the concern with proof search in the sequent calculus, the formalization of vectors was of little importance.}. The term constructor $x\,\tuple{t_1,\ldots,t_k}$ is usually called
\emph{application}. When $n=0$ we may simply write the variable $x$.

We will view contexts $\Gamma$ as finite sets of declarations $x:A$,
where no variable $x$ occurs twice. The context $\Gamma,x:A$ is
obtained from $\Gamma$ by adding the declaration $x:A$, and will only
be written
if $x$ is not
declared in $\Gamma$.
Context union is written as concatenation $\Gamma,\Delta$ for contexts $\Gamma$ and $\Delta$ if $\Gamma\cap\Delta=\emptyset$.
The letters $\Gamma$, $\Delta$, $\Theta$ are used to range over contexts, and the notation $\dom(\Gamma)$ stands for the set of variables declared in $\Gamma$. We will write $\Gamma(x)$ for the type associated with $x$ for $x\in\dom(\Gamma)$, hence viewing $\Gamma$ as a function on $\dom(\Gamma)$.
Context inclusion $\Gamma\subseteq\Delta$ is just set inclusion.

The typing rules are in Fig.~\ref{fig:typing-rules-lambda} and derive sequent $\Gamma\vdash t:A$. $\LVecIntro$ presupposes that the indices for the $t_i$ range over $1,\ldots,k$ and that $\vec B=B_1,\ldots,B_k$, for some $k\geq0$. Such obvious constraints for finite vectors will not be spelt out in the rest of the paper. In the particular case of $k=0$, in which $(x:p)\in\Gamma$ is the only hypothesis of $\LVecIntro$, we type variables (with atoms). Note that the conclusion of the $\LVecIntro$ rule is an atomic sequent---hence a typable term will always be in $\eta$-long form.
\begin{figure}[tb]\caption{Typing rules of $\lb$}\label{fig:typing-rules-lambda}
$$
\begin{array}{c}
\infer[\RIntro]{\seqt\Gamma{\lambda x^A.t}{A\impl B}}{\seqt{\Gamma,x:A}
tB}\quad\quad
\infer[\LVecIntro]
 {\seqt{\Gamma}{x\tuple{t_i}_i}{p}}
 {(x:\vec B\impl p)\in\Gamma\quad\forall i,\,\seqt\Gamma{t_i}{B_i}}
\end{array}
$$
\end{figure}

\subsection{Search for inhabitants, coinductively}

We are concerned with a specific kind of search problems: given $\Gamma$ and $A$, to find $t$ such that $\Gamma\vdash t:A$, that is, to find an \emph{inhabitant} of type $A$ in context $\Gamma$. Under the Curry-Howard correspondence, a pair $\Gamma$, $A$ may be seen as a \emph{logical sequent} $\seq\Gamma A$, and searching for an inhabitant of $A$ in context $\Gamma$ is the same as searching for a proof of that sequent\footnote{To be precise, a proof in natural deduction, which is equivalent to a cut-free, sequent-calculus proof in the system $LJT$ \cite{HerbelinCSL94}.}.

Following \cite{FICS2013,proofsearch-arXiv}, we model this search process through the coinductive $\lb$-calculus, denoted $\cool$. The terms of $\cool$, also called \emph{coterms} or \emph{B\"ohm trees}, are given by
$$ M,N ::=_\cosign \lambda x^A.N\,|\,  x \tuple{N_1,\ldots,N_k}\enspace.$$
This is exactly the previous grammar for $\lb$-terms, but read coinductively, as indicated by the index $\cosign$ (still with finite tuples $\tuple{N_i}_i$). The natural notion of equality between coterms is bisimilarity modulo $\alpha$-equivalence. Following mathematical practice, this is still written as plain equality.

In $\cool$, also the typing rules of Fig.~\ref{fig:typing-rules-lambda} have to be interpreted coinductively---but the formulas/types stay inductive and the contexts finite. Following common practice, we will symbolize the coinductive reading of an inference (rule) by the double horizontal line, but we refrain from displaying Fig.~\ref{fig:typing-rules-lambda} again with double lines---a figure where the two inference rules would be called $\RIntro_\cosign$ and $\LVecIntro_\cosign$.
Such system defines when $\seqt\Gamma NA$ holds for a \emph{finite} context $\Gamma$, a B\"ohm tree $N$ and a type $A$.

Suppose $\seqt\Gamma NA$ holds. Then this sequent has a derivation which is a (possibly infinite) tree of sequents, generated by applying the inference rules bottom-up; and $N$ is a (possibly infinite) coterm, which we call a \emph{solution} of $\sigma$, with $\sigma=(\seq\Gamma A)$. Therefore, such derivations are the structures generated by the search process which does not fail, even if it runs forever, and so they subsume proofs; likewise solutions subsume typable terms (so we may refer to the latter as \emph{finite} solutions\footnote{Solutions subsume finite solutions conservatively. In fact, it is easy to prove that, given a $\lb$-term $t$,  $\seqt\Gamma t A$ in $\ol$ iff $\seqt\Gamma t A$ in $\cool$.})---with solutions still representing derivations, even if infinite, following the Curry-Howard paradigm.

The next step is to extend even further the paradigm, representing also the choice points of the search process. To this end, we extend $\cool$ to $\coolfs$, whose syntax is this:
$$
\begin{array}{lcrcl}
\textrm{(terms)} &  & M,N & ::=_\cosign & \lambda x^A.N\,|\, \ns{E_1}{E_n}\\ 
\textrm{(elimination alternatives)} &  & E & ::=_\cosign & x \tuple{N_1,\ldots,N_k}\\
\end{array}
$$
where both $n,k\geq0$ are arbitrary. $T$ ranges over both terms and elimination alternatives. Note that summands cannot be lambda-abstractions. We will often use $\sum_iE_i$ instead of $\ns{E_1}{E_n}$---in generic situations or if the dependency of $E_i$ on $i$ is clear, as well as the number of elements (if this number is 0, we write the sum as $\oo$).

The most natural notion of equality of terms in $\coolfs$ is again bisimilarity modulo $\alpha$-equivalence, but the notation $\sum_iE_i$ already hints that we consider $+$ to be associative (with $\oo$ as its neutral element). We even want to neglect the precise order of the summands and their (finite) multiplicity. We thus consider the sums of elimination alternatives as if they were sets of alternatives, i.\,e., we further assume that $+$ is symmetric and idempotent. As for $\cool$, we just use mathematical equality for this notion of
bisimilarity on expressions of $\coolfs$, and so the sums of
elimination alternatives can plainly be treated as if they were finite
sets of elimination alternatives (given by finitely many elimination
alternatives of which several might be identified through
bisimilarity).

The expressions of $\coolfs$ are also called \emph{B\"ohm forests}---and a B\"ohm tree $M$ is a member of a B\"ohm forest $N$ when the relation $\colr MN$ defined coinductively in Fig.~\ref{fig:collect} holds.
\begin{figure}[tb]\caption{Membership relations}\label{fig:collect}
$$
\begin{array}{c}
\infer=[]{\colr{\lambda x^A.M}{\lambda x^A.N}}{\colr{M}{N}}\quad\quad
\infer=[]{\colra{x\tuple{M_i}_i}{x\tuple{N_i}_i}}{\forall i,\,\colr{M_i}{N_i}}\quad\quad
\infer=[\forsomej]{\colr{M}{\sum_iE_i}}{\colra{M}{E_j}}\\\\
\end{array}
$$
\end{figure}

In the typing system for $\coolfs$, one derives sequents $\seqt{\Gamma}NA$ and $\seqt{\Gamma}Ep$.
The coinductive typing rules are the ones of $\cool$, together with the rule given in
Fig.~\ref{fig:co-lambda-sum}.
\begin{figure}[tb]\caption{Extra typing rule of $\coolfs$ w.\,r.\,t.~$\cool$}\label{fig:co-lambda-sum}
$$
\begin{array}{c}
\infer=[\Alts]{\seqt{\Gamma}{\sum_iE_i}p}{\forall i,\,\seqt\Gamma{E_i}p}
\end{array}
$$
\end{figure}

A typing derivation of $\coolfs$ is a possibly infinite tree of sequents, generated by the bottom-up application of the inference rules, with ``multiplicative'' branching (logically: ``and'' branching) caused by the list of arguments in elimination alternatives, and ``additive'' branching (logically: ``or'' branching) caused by sums---the latter being able to express the alternatives found in the search process when an atom $p$ can be proved by picking different head variables with their appropriate arguments. So, it is no surprise that, with this infrastructure, we can express, as a single B\"ohm forest, the entire \emph{solution space} generated by the search process when applied to given $\Gamma$ and $A$. That B\"ohm forest can be defined as a function $\solletter$ of $\seq\Gamma A$ defined by corecursion as follows:

\begin{definition}[Solution spaces]
\label{def:sol} 
$$
\sol{\Gamma}{\vec A \impl p} :=  \lambda \vec x:\vec A.\fs{{(y:\vec{B}\supset p)\in\Delta}} {y\fl{j}{\sol{\Delta}{B_j}}}\quad\mbox{with }\Delta:=\Gamma,\vec x:\vec A
$$
\end{definition}
The following properties witness the robustness of the definition \cite{FICS2013,proofsearch-arXiv}.
\begin{proposition}[Properties of solution spaces]\label{prop:properties-of-S} The following properties hold.
\begin{enumerate} 
\item Given $\Gamma$ and $A$, the typing  $\seqt\Gamma{\sol\Gamma A}A$ holds in $\coolfs$.
\item
For $N\in\cool$, $\colr N{\sol\Gamma A}$ iff $\seqt\Gamma N A$ in $\cool$.
\item
For $t\in\ol$, $\colr t{\sol\Gamma A}$ iff $\seqt\Gamma t A$ in $\ol$.\label{prop:properties-of-S.3}
\end{enumerate}
\end{proposition}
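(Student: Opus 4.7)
The plan is to handle the three parts in order, with part~3 reducing to part~2 through the observation stated in the preceding footnote (namely, that for a finite $\lambda$-term $t$, $\seqt\Gamma t A$ in $\ol$ iff $\seqt\Gamma t A$ in $\cool$). All three parts exploit the coinductive nature of the typing rules of $\coolfs$ and of the membership relation $\colr{\cdot}{\cdot}$.

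For part~1, I would exhibit the family $\mathcal{R} = \{\seqt\Gamma{\sol\Gamma A}A : \Gamma,A\}$ as closed under the bottom-up reading of the typing rules of $\coolfs$, which suffices for the coinductive interpretation of typability. Given a sequent $\seqt\Gamma{\sol\Gamma{\vec A\impl p}}{\vec A\impl p}$, unfolding the definition of $\solletter$ yields a block of $\lambda$-abstractions over $\vec x:\vec A$ whose body is the sum $\fs{(y:\vec B\impl p)\in\Delta}{y\fl{j}{\sol{\Delta}{B_j}}}$ with $\Delta = \Gamma,\vec x:\vec A$. A block of $\RIntro$ steps brings the goal to a sequent with body a sum at type $p$; one application of $\Alts$ splits it into one premise per declaration $(y:\vec B\impl p)\in\Delta$; and $\LVecIntro$ reduces each such premise to the sequents $\seqt\Delta{\sol\Delta{B_j}}{B_j}$, which lie back in~$\mathcal{R}$.

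For part~2, both directions are proved by coinduction, taking the natural invariants $\{(\Gamma,N,A) : \colr N{\sol\Gamma A}\}$ for ($\Rightarrow$) and $\{(\Gamma,N,A) : \seqt\Gamma N A\}$ for ($\Leftarrow$). The case analysis is driven by whether the type $A = \vec A\impl p$ has $\vec A$ non-empty (in which case $\sol\Gamma A$ starts with $\lambda$-abstractions and the membership rules force $N$ to match that shape, so the reasoning descends into the body over the extended context $\Delta$) or atomic (in which case $\sol\Gamma p$ is a sum, and the sum rule for $\colr{\cdot}{\cdot}$ selects some alternative $y\fl{j}{\sol{\Gamma}{B_j}}$ for a declaration $(y:\vec B\impl p)\in\Gamma$, matching exactly an instance of $\LVecIntro$, with the arguments $N_j$ paired with the bodies $\sol{\Gamma}{B_j}$ through the coinductive hypothesis).

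Part~3 then follows immediately from part~2 combined with the footnote's observation. The main obstacle I expect is the bureaucracy of making the coinduction rigorous, i.e., carefully exhibiting the invariants and checking that each unfolding step respects them; once that framework is set up, the verification is just a matching between the syntactic shape of $\solletter$, the typing rules, and the membership rules. A minor additional care point is aligning the whole block of $\RIntro$ applications with the vector structure of $\vec A\impl p$ so that the coinductive step is invoked on sequents of the same form as those in $\mathcal{R}$.
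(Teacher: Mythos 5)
The paper gives no proof of this proposition, deferring to \cite{FICS2013,proofsearch-arXiv}, so there is nothing in-paper to compare against; your sketch is the standard argument from that prior work and is correct. Part 1 exhibits the family of sequents $\seqt\Gamma{\sol\Gamma A}A$ as a post-fixed point of the coinductive typing rules, part 2 runs the two coinductions with the natural invariants (the right-to-left direction relying on the $\eta$-long shape that the typing rules force on $N$, and the block of $\RIntro$ steps being handled as a single productive macro-step of the coinduction, as you note), and part 3 follows from part 2 together with the conservativity observation in the footnote.
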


\subsection{Search for inhabitants, inductively}

Unfortunately, algorithms cannot in general receive B\"ohm forests as input, so the next step is to find an alternative, equivalent, effective representation that works at least for solution spaces. To this end, an extension $\olfsfix$ of $\lb$ is introduced, whose syntax is given by the following grammar (read inductively):
$$
\begin{array}{lcrcl}
\textrm{(terms)} &  & N & ::= & \lambda x^A.N\mid \gfp\,{X^\sigma}.\ns{E_1}{E_n}\mid X^\sigma\\
\textrm{(elimination alternatives)} &  & E & ::= & x \tuple{N_1,\ldots,N_k}\\
\end{array}
$$
where $X$ is assumed to range over a countably infinite set of
\emph{fixpoint variables} (also letters $Y$, $Z$ will range over them), and where, as for $\coolfs$,
both $n,k\geq0$ are arbitrary. We extend our practice established for $\coolfs$ of writing the sums $\ns{E_1}{E_n}$ in the form $\sum_iE_i$ for $n\geq0$. Also the tuples continue to be communicated as $\fl i{N_i}$. As for $\coolfs$, we will identify expressions modulo associativity, symmetry and idempotence of $+$, thus treating sums of elimination alternatives as if they were the set of those elimination alternatives. Again, we will write $T$ for expressions of $\olfsfix$, i.\,e., for terms and elimination alternatives.

In the term formation rules, $\sigma$ in $X^\sigma$ is required to be \emph{atomic}, i.\,e., of the form $\seq\Gamma p$.
Let $\FPV(T)$ denote the set of free occurrences of typed fixed-point variables in
$T$. Perhaps unexpectedly, in $\gfp\,{X^\sigma}.\sum_iE_i$ the fixed-point construction $\gfp$ binds
\emph{all} free occurrences of $X^{\sigma'}$ in the elimination alternatives $E_i$, not
just $X^\sigma$. But we only want this to happen when $\sigma\leq \sigma'$---which means: the context of $\sigma'$ has more declarations than that of $\sigma$, but not with new types. Formally: $\sigma=(\seq\Gamma p)$, $\sigma'=(\seq{\Gamma'}p)$ and $\Gamma\leq\Gamma'$, with the latter meaning $\Gamma\subseteq\Gamma'$ but $|\Gamma|=|\Gamma'|$, and $|\Delta|$ denoting $\{A\mid \exists x,\, (x:A)\in\Delta\}$ for arbitrary context $\Delta$.

In the sequel, when we refer to \emph{finitary terms} we have in mind the expressions of $\olfsfix$. The fixed-point operator is called
$\gfp$ (``greatest fixed point'') to indicate that its semantics is---see below---defined in terms of the \emph{infinitary} syntax $\coolfs$, but there, fixed points are unique. Hence, the reader may just read this as ``the fixed point''.

We now move to the interpretation of expressions of $\olfsfix$ in terms of the coinductive syntax of $\coolfs$ (using the $\fix$ operation on the meta-level to designate unique fixed points). It is done with the help of \emph{environments} $\xi$, which are partial functions from typed fixed-point variables $X^\sigma$ to
(co)terms of $\coolfs$, with domain $\dom(\xi)$ a finite set of typed fixpoint variables \emph{without duplicates}, which means: $X^{\sigma_1},X^{\sigma_2}\in\dom(\xi)\Rightarrow\sigma_1=\sigma_2$.

Some technicalities are needed before giving the interpretation. We say an environment $\xi$ is admissible for an expression $T$ of $\olfsfix$ if, for every $X^{\sigma'}\in\FPV(T)$, there is an $X^\sigma\in\dom(\xi)$ such that $\sigma\leq\sigma'$. It is easy to see that $T$ admits an environment iff it is \emph{regular} in the following sense: if $X$ occurs free in $T$, there is a sequent $\sigma$ that is the minimum of all $\sigma'$ such that $X^{\sigma'}\in\FPV(T)$. Finally, the interpretation is only given for well-bound expressions, where  $T\in\olfsfix$ is \emph{well-bound} if, for any of its subterms $\gfp\,{X^\sigma}.\sum_iE_i$ and any (free) occurrence of $X^{\sigma'}$ in the $E_i$'s,  $\sigma\leq\sigma'$.
\begin{definition}[Interpretation of finitary terms as B\"ohm forests]
For a well-bound expression $T$ of $\olfsfix$, the interpretation $\interp T\xi$ for an environment $\xi$ that is admissible for 
$T$ is given 
by structural recursion on $T$:
$$
\begin{array}{rcll}
\interp{X^{\sigma'}}\xi& = & [\sigma'/\sigma]\xi(X^{\sigma})\quad\textrm{for the unique $\sigma\leq\sigma'$ with $X^\sigma\in\dom(\xi)$}\\
\interp{\gfp\,{X^{\sigma}}.\s{i}{E_i}}\xi&= & \fix\, N.\s i{\interp {E_i}{\xi\cup[X^{\sigma}\mapsto N]}}\\
\interp{\lambda x^A.N}{\xi}& = & \lambda x^A.\interp N\xi\\
\interp{x \fl i{N_i}}\xi&= & x \tuple{{\interp{N_i}\xi}}_i\\
\end{array}
$$
\end{definition}
If $T$ is closed, i.\,e., $\FPV(T)=\emptyset$, then the empty function is an admissible environment for $T$, and we write $\interpwe T$.

The clause for fixpoint variables in the preceding definition has to cope with $\sigma\leq\sigma'$. This is done by adjusting the value $N=\xi(X^{\sigma})$ looked up in the environment with an operation on B\"ohm forests which will add elimination alternatives to the sums in $N$, in order to match the new declarations in $\sigma'$. If $\sigma=(\seq\Gamma p)$ and $\sigma'=(\seq{\Gamma'}p)$, then $[\sigma'/\sigma]N$ is defined to be $[\Gamma'/\Gamma]N$, with the latter given as follows:

\begin{definition}[Co-contraction]\label{def:co-cont-forests}
Let $\Gamma\leq\Gamma'$. For $T$ an expression of  $\coolfs$, we define $[\Gamma'/\Gamma]T$ by
corecursion as follows:
$$
\begin{array}{lcll}
{[}\Gamma'/\Gamma](\lb x^A.N)&=&\lb x^A.[\Gamma'/\Gamma]N\\
{[}\Gamma'/\Gamma]\s i{E_i}&=&\s i{[\Gamma'/\Gamma]E_i}\\
{[}\Gamma'/\Gamma]\big(z\fl i{N_i}\big)&=&z\fl i{[\Gamma'/\Gamma]N_i}&\textrm{if $z\notin dom(\Gamma)$}\\
{[}\Gamma'/\Gamma]\big(z\fl
i{N_i}\big)&=&\kern-1em\s{(w:A)\in\Delta_z}{w}\fl
i{[\Gamma'/\Gamma]N_i}&\textrm{if $z\in dom(\Gamma)$}
\end{array}
$$
where, in the last clause, $A:=\Gamma(z)$ and $\Delta_z:=\{(z:A)\}\cup(\Gamma'\setminus\Gamma)$.
\end{definition}
Co-contraction captures the extension of the solution space when going
from $\sigma$ to some $\sigma'$ with $\sigma\leq\sigma'$:
\begin{lemma}[Solution spaces and co-contraction]\label{lem:solextension}
Let $\sigma\leq\sigma'$. Then $\solfunction{\sigma'}=[\sigma'/\sigma]\solfunction\sigma$.
\end{lemma}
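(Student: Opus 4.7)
Since the equality is between infinitary B\"ohm forests, the natural approach is coinduction, unfolding both sides via Definition~\ref{def:sol} and Definition~\ref{def:co-cont-forests}. The immediate difficulty is that unfolding $\sol{\Gamma}{\vec A \supset p}$ introduces a $\lambda$-abstraction over fresh $\vec x : \vec A$, and the recursive calls then involve $\sol{\Gamma', \vec x : \vec A}{B_j}$ on the left and $[\Gamma'/\Gamma]\sol{\Gamma, \vec x : \vec A}{B_j}$ on the right: the co-contraction is still indexed by the original $\Gamma, \Gamma'$ and not by the enlarged contexts. Hence the bare statement is not preserved under the unfolding step, and the plan is to strengthen it first.

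The strengthening I would prove is: for any $\Gamma \leq \Gamma'$, any context $\Theta$ disjoint from $\Gamma'$, and any type $A$, $\sol{\Gamma', \Theta}{A} = [\Gamma'/\Gamma]\sol{\Gamma, \Theta}{A}$. The lemma is the case $\Theta = \emptyset$. I would prove the strengthening by exhibiting, for all admissible parameters, the relation pairing the two sides as a bisimulation on B\"ohm forests, modulo the set-like identification of sums. Writing $A = \vec A \supset p$, Definition~\ref{def:sol} together with the first two clauses of Definition~\ref{def:co-cont-forests} put both sides in the form $\lambda \vec x : \vec A .\, S$ for sums $S$ of elimination alternatives, and the task reduces to verifying bisimilarity of the two resulting $S$'s.

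This splits into two parts. First, the set of head-variable summands must coincide. The left-hand summands are indexed by $(y : \vec B \supset p) \in \Gamma', \Theta, \vec x : \vec A$, while on the right the last two clauses of Definition~\ref{def:co-cont-forests} leave summands with head $y \in \Theta \cup \vec x$ unchanged and expand each summand with head $y \in \Gamma$ into a sum over $\{y\} \cup (\Gamma' \setminus \Gamma)$ restricted to the type $\vec B \supset p$. The side condition $|\Gamma| = |\Gamma'|$ from the definition of $\leq$ guarantees that every $w \in \Gamma' \setminus \Gamma$ shares its type with some $y \in \Gamma$ and therefore does get reached; after collapsing repetitions by idempotence of $+$, the total head-variable set on the right is exactly $\Gamma', \Theta, \vec x$ restricted to type $\vec B \supset p$, matching the left. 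Second, for each head $z$ of type $\vec B \supset p$, the left argument vector $\langle \sol{\Gamma', \Theta, \vec x}{B_j}\rangle_j$ is paired with the right argument vector $\langle [\Gamma'/\Gamma]\sol{\Gamma, \Theta, \vec x}{B_j}\rangle_j$; this pair is already in the relation, with $\Theta$ replaced by $\Theta, \vec x : \vec A$, which discharges the coinductive hypothesis. The main obstacle will be the head-set bookkeeping---in particular isolating exactly where the condition $|\Gamma| = |\Gamma'|$ enters---and, before that, picking the right generalization: $\Theta$ must be threaded through, since otherwise the recursive call falls outside the statement of the bare lemma.
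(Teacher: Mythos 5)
The paper does not prove this lemma itself: it is recalled as part of the background from the authors' earlier work \cite{FICS2013,proofsearch-arXiv}, and no argument for it appears in the body or the appendix. So there is no in-paper proof to compare against; judged on its own, your proposal is correct and is essentially the argument one would expect. You correctly identify the real issue --- that the bare statement is not closed under one unfolding of Definition~\ref{def:sol}, because the recursive occurrences carry the co-contraction $[\Gamma'/\Gamma]$ while the contexts have grown by $\vec x:\vec A$ --- and the strengthening with an extra parameter $\Theta$ (instantiated to $\Theta,\vec x:\vec A$ in the coinductive step) is the right fix; keeping $\Theta$ disjoint from $\Gamma'$ also quietly takes care of the freshness of the bound $\vec x$ needed for the third and fourth clauses of Definition~\ref{def:co-cont-forests} to dispatch correctly. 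The head-set bookkeeping is also right: $|\Gamma|=|\Gamma'|$ is exactly what guarantees that every $w\in\Gamma'\setminus\Gamma$ is reached from some $y\in\Gamma$ of the same type, and the duplicate summands produced when several such $y$ exist are literally identical (the argument vectors depend only on $\Delta$ and the $B_j$), so idempotence of $+$ absorbs them, as you say. The only point worth making explicit in a full write-up is that the exhibited relation is a bisimulation only up to the congruence generated by associativity, commutativity and idempotence of $+$, which is the equality the paper works with throughout, so this is a formality rather than a gap.
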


With the finitary calculus and its semantics in place, we can provide an alternative representation $\finrepempty{\sigma}$ of the search space generated by a sequent $\sigma$.
\begin{definition}[Finitary solution space]
\label{def:finrep} Let $\Xi:=\vect{X:\seq{\Theta}q}$ be a vector
of $m\geq 0$ declarations $(X_i:\seq{\Theta_i}{q_i})$ where no
fixpoint variable name and no sequent occurs twice. The specification of $\finrep{\seq\Gamma{\vec
A\impl p}}{\Xi}$ is as follows:

If, for some $1\leq i\leq m$, $p=q_i$ and $\Theta_i\subseteq\Gamma$ and $|\Theta_i|=|\Gamma|\cup\{A_1,\ldots,A_n\}$, then
$$
\finrep{\seq\Gamma{\vec A\impl p}}{\Xi}=\lb z_1^{A_1}\cdots
z_n^{A_n}.X_i^{\sigma}\enspace,$$
where $i$ is taken to be the biggest such index.
Otherwise, 
$$\finrep{\seq\Gamma{\vec A\impl p}}{\Xi}=
\lb z_1^{A_1}\cdots
z_n^{A_n}.\gfp\,{Y^{\sigma}}.{\s{(y:\vec {B}\impl p)\in\Delta}{y \fl
j{\finrep{\seq{\Delta}{B_j}}{\Xi,Y:\sigma}}}}
$$
where, in both cases, $\Delta:=\Gamma,z_1:A_1,\ldots,z_n:A_n$ and $\sigma:=\seq{\Delta}p$.
\end{definition}
$\finrepempty{\sigma}$ denotes $\finrep{\sigma}{\Xi}$ with empty $\Xi$. It can be proved that: (i) $\finrepempty{\sigma}$ is well-defined (the above recursive definition terminates); (ii) $\finrepempty{\sigma}$ is a closed well-bound term.

The semantics into $\coolfs$ of the finitary representation coincides with $\solletter(\sigma)$ \cite{FICS2013,proofsearch-arXiv}.
\begin{theorem}[Equivalence]\label{thm:FullProp}
For any sequent $\sigma$, $\interpwe{\finrepempty{\sigma}}=\solfunction\sigma$.
\end{theorem}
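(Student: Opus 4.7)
The plan is to prove a strengthening of the theorem that accounts for the non-closed finitary terms $\finrep\sigma\Xi$ arising during the recursion of Definition~\ref{def:finrep}. Concretely, I would show that for every well-bound $\finrep\sigma\Xi$ and every environment $\xi$ that is compatible with $\Xi=\vect{X:\sigma_X}$, in the sense that $\xi(X^{\sigma_X})=\solfunction{\sigma_X}$ for each $X\in\Xi$, one has $\interp{\finrep\sigma\Xi}\xi=\solfunction\sigma$. The stated theorem is then the instance with empty $\Xi$ and the empty environment, relying on the facts (recalled right after Definition~\ref{def:finrep}) that $\finrepempty\sigma$ is closed and well-bound. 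The proof then proceeds by well-founded induction on the measure witnessing termination of $\finrep$, case-splitting along the dichotomy of Definition~\ref{def:finrep}.

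\textbf{Back-edge case.} Here $\finrep\sigma\Xi=\lambda z_1^{A_1}\cdots z_n^{A_n}.X_i^\tau$ with $\tau:=\seq\Delta p$ and $\sigma_{X_i}\leq\tau$. Unfolding the interpretation produces $\lambda\vec z.[\tau/\sigma_{X_i}]\xi(X_i^{\sigma_{X_i}})$; compatibility of $\xi$ replaces $\xi(X_i^{\sigma_{X_i}})$ by $\solfunction{\sigma_{X_i}}$, and Lemma~\ref{lem:solextension} then collapses the whole to $\lambda\vec z.\solfunction\tau$, which is $\solfunction\sigma$ by Definition~\ref{def:sol}. \textbf{Fixpoint case.} Here $\finrep\sigma\Xi=\lambda\vec z.\gfp Y^\tau.\sum_{(y:\vec B\impl p)\in\Delta}y\fl j{\finrep{\seq\Delta{B_j}}{\Xi,Y:\tau}}$, whose interpretation has the shape $\lambda\vec z.\fix N.F(N)$ with $F(N):=\s{(y:\vec B\impl p)\in\Delta}{y\fl j{\interp{\finrep{\seq\Delta{B_j}}{\Xi,Y:\tau}}{\xi\cup[Y^\tau\mapsto N]}}}$. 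By uniqueness of fixed points in $\coolfs$, it suffices to show that $\solfunction\tau$ is a fixed point of $F$. Applying the induction hypothesis to each smaller recursive call $\finrep{\seq\Delta{B_j}}{\Xi,Y:\tau}$ under the extended environment $\xi\cup[Y^\tau\mapsto\solfunction\tau]$---compatible with $\Xi,Y:\tau$ because $\xi$ was compatible with $\Xi$---rewrites each $j$-th argument in $F(\solfunction\tau)$ as $\solfunction{\seq\Delta{B_j}}$, whence $F(\solfunction\tau)=\solfunction\tau$ by Definition~\ref{def:sol}.

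The main obstacle is justifying the appeal to fixed-point uniqueness in the fixpoint case: this requires $F$ to be guarded in its argument $N$. Guardedness holds because, in the definiens of $\finrep$ in the fixpoint case, every recursive occurrence of $\finrep{\cdots}{\Xi,Y:\tau}$ sits beneath an elimination-alternative constructor $y\fl j{\cdot}$; consequently, in $F(N)$ every occurrence of $N$---which can only originate from interpreting a free $Y^{\tau'}$ with $\tau\leq\tau'$ after the co-contraction prescribed by the interpretation---is guarded by some $y\langle\ldots\rangle_j$. A minor secondary obligation is to verify that well-boundedness and admissibility of the environment are preserved when $\xi$ is extended by $[Y^\tau\mapsto\solfunction\tau]$, which follows from $Y$ being chosen fresh at each recursive step.
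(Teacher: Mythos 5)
Your argument is essentially correct, but note that the paper does not prove Theorem~\ref{thm:FullProp} itself: it is quoted as a result of the authors' earlier work \cite{FICS2013,proofsearch-arXiv}, so there is no in-paper proof to match against. Your strengthened statement---induction over the call tree of $\finrepsymb$, carrying an environment $\xi$ that sends each $X^{\sigma_X}$ declared in $\Xi$ to $\solfunction{\sigma_X}$, with the back-edge case discharged by Lemma~\ref{lem:solextension} and the fixpoint case by exhibiting $\solfunction\tau$ as the (unique, guarded) fixed point---is the standard direct route and matches what the cited prior work does; you correctly identify the two genuine obligations (guardedness of the fixed-point functional, which holds because every occurrence of the bound variable sits under a tuple constructor $y\fl j\cdot$ and co-contraction preserves head constructors, and preservation of admissibility under the extension $[Y^\tau\mapsto\solfunction\tau]$). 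It is worth observing that the present paper offers a smoother derivation of the same theorem as a byproduct of its new material: Theorem~\ref{thm:FullProp-simpl} shows by the same induction that $\finrep\sigma\Xi$ is proper and that $\interps{\finrep\sigma\Xi}=\solfunction\sigma$, and Corollary~\ref{cor:simplequal} then yields $\interpwe{\finrepempty\sigma}=\interps{\finrepempty\sigma}=\solfunction\sigma$. That route buys a main induction with no environments and no fixed-point-uniqueness reasoning (both are quarantined once and for all in Lemma~\ref{lem:simplequal}), at the cost of having to prove the two items of Theorem~\ref{thm:FullProp-simpl} simultaneously; your route is self-contained but re-does the fixed-point argument inside the main induction. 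Either way the content is the same, and your proof is sound.
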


\subsection{The finite extension}\label{sec:finext}

We now introduce some notions pertaining to the present paper, given its focus on finite inhabitants (\emph{i.e.}, $\lb$-terms).

For $T\in\coolfs$, we call \emph{finite extension} of $T$, which we denote by $\finext T$,  the set of the finite members of $T$, i.\,e., $\finext T=\{t\in\ol\mid\mem tT\}$.
We will be mainly interested in the following predicates on B{\"o}hm forests concerning the finite extension:
\begin{itemize}
\item $\exfinmem T$ is defined to hold iff $\finext T$ is nonempty.
\item $\allinfmem T$ is defined to hold iff $\finext T$ is empty.
\item $\finfinmem T$ is defined to hold iff $\finext T$ is finite.
\item $\inffinmem T$ is defined to hold iff $\finext T$ is infinite.
\end{itemize}
The predicates $\exfinmemsymb$ and $\finfinmemsymb$ will be characterized inductively in
Sect.~\ref{sec:emptiness} and in Sect.~\ref{sec:finiteness}, respectively, together with coinductive characterizations
of $\allinfmemsymb$ and $\inffinmemsymb$
by the generic De Morgan's law relating least and greatest fixed points.


\section{The inhabitation problems}\label{sec:inhab}

We will study two decision problems in simply-typed $\lambda$-calculus: the inhabitation problem and the type finiteness problem. First, we lay down the common approach we will adopt.

Given $\Gamma$ and $A$, we will write $\inhab\Gamma A$ for the set of inhabitants of $A$ relative to context $\Gamma$ in $\ol$, i.\,e., for the set $\{t\in\ol\mid \seqt \Gamma tA\;\textrm{in}\;\ol\}$. Recall that this describes the set of $\eta$-long $\beta$-normal terms of ordinary simply-typed $\lambda$-calculus receiving type $A$ in context $\Gamma$.

\emph{The} inhabitation problem in simply-typed $\lambda$-calculus is the problem ``given $\Gamma$ and $A$, is the set $\inhab{\Gamma}{A}$ nonempty?'', called $\inhabproblem$ in this paper. Its negation is called the ``emptiness problem'' 
(as is well-known, the answer to this question does not depend on whether all $\lambda$-terms are considered or only the $\beta$-normal ones or even the $\eta$-long $\beta$-normal terms).
Decidability of the inhabitation problem in simply-typed $\lambda$-calculus is a well-known result (see, e.\,g., \cite{Statman79}).

\begin{lemma}[Characterization of existence of inhabitants in $\ol$]
	\label{lemma:char-inhab}
	There is a $t\in\ol$ such that \mbox{$\seqt \Gamma t A$} in $\ol$ iff $\exfinmem{\interpwe{\finrepempty{\seq\Gamma A}}}$.

\end{lemma}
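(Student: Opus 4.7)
The statement is essentially a direct unfolding of the definitions combined with the two main results already established in the background section. The plan is therefore very short: chain together Theorem~\ref{thm:FullProp} with clause~\ref{prop:properties-of-S.3} of Proposition~\ref{prop:properties-of-S}, and then match the result against the definition of $\exfinmemsymb$.

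More concretely, first I would observe that by the definition of the finite extension, $\exfinmem{\interpwe{\finrepempty{\seq\Gamma A}}}$ unfolds to the existence of some $t\in\ol$ such that $\colr{t}{\interpwe{\finrepempty{\seq\Gamma A}}}$. Next, I would rewrite the right-hand B\"ohm forest using Theorem~\ref{thm:FullProp}, which gives $\interpwe{\finrepempty{\seq\Gamma A}}=\solfunction{\seq\Gamma A}$, so that the condition becomes the existence of $t\in\ol$ with $\colr{t}{\sol\Gamma A}$. Finally, I would apply Proposition~\ref{prop:properties-of-S}(\ref{prop:properties-of-S.3}), which states precisely that for $t\in\ol$, $\colr{t}{\sol\Gamma A}$ is equivalent to $\seqt\Gamma t A$ in $\ol$. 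Quantifying over $t$ yields the desired equivalence.

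There is no real obstacle here: the lemma is a packaging result whose whole content is that the finitary representation faithfully captures the typable $\lambda$-terms, and both halves of that statement have already been done in the cited references. The only thing to be careful about is to keep the quantifier $\exists t\in\ol$ on the outside throughout the chain of rewrites, so that at each step one applies the pointwise equivalence under the same existential, rather than invoking an equivalence on sets of infinitary coterms.
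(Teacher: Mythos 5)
Your proposal is correct and follows essentially the same route as the paper: the paper's proof is exactly the chain of equivalences through Proposition~\ref{prop:properties-of-S}.\ref{prop:properties-of-S.3}, Theorem~\ref{thm:FullProp}, and the definition of $\exfinmemsymb$, merely written in the other direction (starting from the typing statement rather than from $\exfinmemsymb$). Nothing is missing.
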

\begin{proof}
	\begin{tabular}[t]{cll}
		&$\exists t\in\ol$ s.\,t. $\seqt \Gamma t A$ in $\ol$\\
		iff&$\exists t\in\ol$ s.\,t. $\colr t{\sol\Gamma A}$\qquad(Prop.~\ref{prop:properties-of-S}.\ref{prop:properties-of-S.3})\\
		iff&$\exists t\in\ol$ s.\,t. $\colr t{\interpwe{\finrepempty{\seq\Gamma A}}}$\qquad(Theorem~\ref{thm:FullProp})\\
		iff&$\exfinmem{\interpwe{\finrepempty{\seq\Gamma A}}}$\qquad(by definition of $\exfinmemsymb$)
	\end{tabular}\\[-2ex]
\end{proof}

As seen above, the function $\finrepsymb$ is effectively computable,
and it yields closed well-bound finitary terms. The missing link to
deciding $\inhabproblem$ is thus the decision of the problem
``given a closed well-bound term $T$, does $\exfinmem{\interpwe{T}}$
hold?''. Of course, one cannot deal with closed finitary terms $T$ in
isolation and needs to address fixpoint variables properly. Neither
the interpretation function $\interpwe{\cdot}$ nor the predicate $\exfinmemsymb$ are
effective, but we will define in Section~\ref{sec:emptiness} a syntax-directed
predicate $\nbinfsymb$ (more precisely, it will be a predicate $\nbinfpsymb P$
parameterized over a decidable predicate $P$)
on finitary terms that is equivalent to the composition
$\exfinmemsymb\circ\interpwe{\cdot}$, for at least those closed well-bound terms
that arise as $\finrepempty{\sigma}$ for some sequent $\sigma$ (technically,
the restriction will be to proper terms, as defined in Section~\ref{sec:simplesemantics}).
Syntax-directedness immediately entails that the predicate is decidable.

The appeal of our approach is that, once the finitary representation of the corresponding sequent has been built as $\finrepempty{\sigma}$, the decision of inhabitation is achieved through a simple recursive function over the structure of $\olfsfix$-terms,  corresponding to an inductive predicate adequately characterizing non-emptiness of types.

Using the same methodology, we can also reprove a more difficult and
not so well-known  result of inhabitation for simply-typed $\lambda$-calculus, namely, that the problem ``given $\Gamma$ and $A$, is the set $\inhab{\Gamma}{A}$ finite?'' is decidable (see, e.\,g.~\cite{Hirokawa98}). This problem---henceforth called $\finhabproblem$---depends on studying only $\beta$-normal terms; to recall, the inhabitants of our system $\ol$ are $\eta$-long $\beta$-normal simply-typed $\lambda$-terms, for which the problem is studied in the literature \cite{HindleyBasicSimple} (there, in particular, the algorithm by Ben-Yelles~\cite{benyellesthesis}).

\begin{lemma}[Characterization of type finiteness in $\ol$]
	\label{lemma:char-fin-inhab}
	The set of inhabitants $\inhab{\Gamma}{A}$ is finite iff $\finfinmem{\interpwe{\finrepempty{\seq\Gamma A}}}$.
\end{lemma}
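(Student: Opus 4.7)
The plan is to mirror the proof of Lemma~\ref{lemma:char-inhab} almost verbatim, replacing the existence statement by a statement about the cardinality of the relevant set. The whole argument is a chain of set-equalities (each preserving cardinality, and in particular finiteness), culminating in the definitional unfolding of $\finfinmemsymb$.

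First, I would rewrite $\inhab{\Gamma}{A}$ using its definition as $\{t\in\ol\mid \seqt \Gamma tA\;\textrm{in}\;\ol\}$. Next, Proposition~\ref{prop:properties-of-S}.\ref{prop:properties-of-S.3} gives that, for $t\in\ol$, $\seqt\Gamma tA$ in $\ol$ iff $\colr t{\sol\Gamma A}$; so this set is equal (as a set of $\ol$-terms) to $\{t\in\ol\mid \colr t{\sol\Gamma A}\}$. Then Theorem~\ref{thm:FullProp} gives $\sol\Gamma A=\interpwe{\finrepempty{\seq\Gamma A}}$, so we can further rewrite the set as $\{t\in\ol\mid \colr t{\interpwe{\finrepempty{\seq\Gamma A}}}\}$, which by the definition of $\finextsymb$ is precisely $\finext{\interpwe{\finrepempty{\seq\Gamma A}}}$. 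Finiteness of this last set is by definition $\finfinmem{\interpwe{\finrepempty{\seq\Gamma A}}}$.

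Concretely, I would just display the chain of equivalences in the same tabular style as in the proof of Lemma~\ref{lemma:char-inhab}, but with ``is finite'' applied to each set of $\ol$-terms rather than ``is nonempty'':
\begin{center}
\begin{tabular}{cl}
& $\inhab{\Gamma}{A}$ is finite \\
iff & $\{t\in\ol\mid \seqt\Gamma tA\;\textrm{in}\;\ol\}$ is finite \quad (definition of $\inhab{\Gamma}{A}$) \\
iff & $\{t\in\ol\mid \colr t{\sol\Gamma A}\}$ is finite \quad (Prop.~\ref{prop:properties-of-S}.\ref{prop:properties-of-S.3}) \\
iff & $\{t\in\ol\mid \colr t{\interpwe{\finrepempty{\seq\Gamma A}}}\}$ is finite \quad (Theorem~\ref{thm:FullProp}) \\
iff & $\finext{\interpwe{\finrepempty{\seq\Gamma A}}}$ is finite \quad (definition of $\finextsymb$) \\
iff & $\finfinmem{\interpwe{\finrepempty{\seq\Gamma A}}}$ \quad (definition of $\finfinmemsymb$)
\end{tabular}
\end{center}

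There is no real obstacle here: the heavy lifting was already done in the previously cited results. The lemma is essentially a bookkeeping statement that reduces the type-finiteness question about $\ol$ to a question about the finitary term $\finrepempty{\seq\Gamma A}$ via its interpretation in $\coolfs$. The real work, which does not belong to this lemma, will be to turn $\finfinmemsymb\circ\interpwe{\cdot}$ into a syntax-directed, decidable predicate on finitary terms, exactly analogous to the $\nbinfpsymb P$ predicate announced for $\exfinmemsymb\circ\interpwe{\cdot}$ in Section~\ref{sec:emptiness}; that will be the content of Section~\ref{sec:finiteness}, not of this lemma.
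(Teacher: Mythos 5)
Your proof is correct and follows exactly the route the paper takes: the paper's own proof is the one-line remark that the argument is analogous to that of Lemma~\ref{lemma:char-inhab}, invoking Prop.~\ref{prop:properties-of-S}.\ref{prop:properties-of-S.3}, Theorem~\ref{thm:FullProp} and the definition of $\finfinmemsymb$. Your explicit chain of set equalities (each preserving finiteness) is simply a spelled-out version of that same argument.
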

\begin{proof}
Analogous to the proof of Lemma \ref{lemma:char-inhab}, following from Prop.~\ref{prop:properties-of-S}.\ref{prop:properties-of-S.3}, Theorem \ref{thm:FullProp} and definition of $\finfinmemsymb$.
\end{proof}

Analogously to the emptiness problem, our method for establishing decidability of $\finhabproblem$ is to define a recursive predicate on finitary terms that is equivalent to the composition
$\finfinmemsymb\circ\interpwe{\cdot}$, for at least those closed well-bound terms
that arise as $\finrepempty{\sigma}$ for some sequent $\sigma$
(with the same technical condition as for the emptiness problem).
This will be the predicate $\FFsymb$ (again, rather a parameterized predicate $\FFpsymb P$),
studied in Sect.~\ref{sec:finiteness}. Again, the appeal of our approach is that, after building the finitary
representation of the corresponding sequent through the
$\finrepempty{\sigma}$ function, $\finhabproblem$ is decided by a simple
function given recursively over the structure of $\olfsfix$-terms, which, however, additionally
uses the previously established decision algorithm for $\inhabproblem$.

In both cases, the problem is of the form $P\circ\solletter$ on sequents,
and thanks to $\solletter=\interpwe{\cdot}\circ\finrepsymb$ and
associativity, we have to decide
$(P\circ\interpwe{\cdot})\circ\finrepsymb$, where $\finrepsymb$ is
already computable. The solution is by proposing a recursive predicate
$P'$ that can step in for $P\circ\interpwe{\cdot}$, as far the image
of $\finrepsymb$ is concerned (specifically, those terms are
well-bound, have no free fixpoint variables and are proper in the sense of
Definition~\ref{def:proper} below).
Finally, the decision is done by deciding $P'\circ\finrepsymb$.

We will carry out the two instances of this programme, but for this,
it will prove useful to simplify our semantics of finitary terms.


\subsection{A simplified semantics}\label{sec:simplesemantics}

We introduce a simplified interpretation of expressions of $\olfsfix$
in terms of the coinductive syntax of $\coolfs$. We now dispense with
environments and adopt a simpler and even possibly ``wrong''
interpretation, which, however, for $\olfsfix$-terms representing
solution spaces will be seen to be equivalent.
\begin{definition}[Simplified interpretation of finitary terms as B\"ohm forests]
For an expression $T$ of $\olfsfix$, the simplified interpretation $\interps T$ is given
by structural recursion on $T$:
$$
\begin{array}{rcll}
\interps{X^{\sigma}}& = &\solfunction{\sigma}\\
\interps{\gfp\,{X^{\sigma}}.\s{i}{E_i}}&= & \s i{\interps {E_i}}\\
\interps{\lambda x^A.N}& = & \lambda x^A.\interps N\\
\interps{x \fl i{N_i}}&= & x \tuple{{\interps{N_i}}}_i\\
\end{array}
$$
\end{definition}
Note that the base case now profits from the sequent annotation at
fixpoint variables, and the interpretation of the $\gfp$-constructor
dispenses with the use of the $\fix$ operation on the meta-level to
designate unique fixed points on $\coolfs$-expressions. Of course,
this may be ``wrong'' according to our understanding of a greatest
fixed point.

Below, we will be specially interested in the finitary terms which
guarantee that a $\gfp X^\sigma$ construction represents the solution
space of $\sigma$.

\begin{definition}[Proper expressions]\label{def:proper}
  An expression $T\in\olfsfix$ is proper if for any of its subterms
  $T'$ of the form $\gfp\,{X^{\sigma}}.\s{i}{E_i}$, it holds that
  $\interps{T'}=\solfunction{\sigma}$.
\end{definition}

This means that an expression $T$ is considered proper if, despite having
used the simplified definition of semantics for the embedded fixed
points, those subterms have the ``proper'' semantics, and this is only
expressed with respect to our main question of representing solution
spaces, hence where for the fixed-point variables, the reference
semantics of solution spaces is assumed, and this is possible since the fixed-point
variables carry the sequent whose solution space they are intended to
represent.

For proper expressions, the simplified semantics agrees with the semantics
we studied before. Of course, this can only make sense for expressions which
have that previous semantics, in other words for well-bound and regular
expressions.

\begin{lemma}\label{lem:simplequal}
  Let $T$ be well-bound and $\xi$ be an admissible environment for $T$
  such that for all $X^\sigma\in\dom(\xi)$:
  $\xi(X^\sigma)=\solfunction\sigma$. If $T$ is proper, then
  $\interp T\xi=\interps T$.
\end{lemma}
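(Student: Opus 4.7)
The plan is to proceed by structural induction on $T$, treating the four syntactic forms of $\olfsfix$-expressions, and relying on uniqueness of fixed points in $\coolfs$ to handle the $\gfp$ case.

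For the base case $T=X^{\sigma'}$, admissibility provides a unique $X^\sigma\in\dom(\xi)$ with $\sigma\leq\sigma'$. Then $\interp{X^{\sigma'}}\xi=[\sigma'/\sigma]\xi(X^\sigma)=[\sigma'/\sigma]\solfunction\sigma$ by the hypothesis on $\xi$, and this equals $\solfunction{\sigma'}=\interps{X^{\sigma'}}$ by Lemma~\ref{lem:solextension}. The cases $T=\lambda x^A.N$ and $T=x\fl i{N_i}$ are immediate by applying the induction hypothesis to the direct subterms, after observing that well-boundedness, properness and admissibility of $\xi$ for $T$ all restrict to the subterms (for admissibility this uses that free fixpoint variables of a subterm are free fixpoint variables of $T$).

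The interesting case is $T=\gfp\,{X^{\sigma}}.\s{i}{E_i}$. Here
\[
\interp{T}\xi\;=\;\fix\,N.\s i{\interp{E_i}{\xi\cup[X^\sigma\mapsto N]}},
\qquad
\interps{T}\;=\;\s i{\interps{E_i}},
\]
and properness of $T$ asserts $\interps{T}=\solfunction\sigma$. The strategy is to show that $N:=\solfunction\sigma$ is a fixed point of the operator $N\mapsto\s i{\interp{E_i}{\xi\cup[X^\sigma\mapsto N]}}$; uniqueness of the fixed point in $\coolfs$ then yields $\interp{T}\xi=\solfunction\sigma=\interps{T}$. To establish the fixed-point property, consider the extended environment $\xi':=\xi\cup[X^\sigma\mapsto\solfunction\sigma]$. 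This environment is admissible for each $E_i$: well-boundedness of $T$ guarantees that every free occurrence of $X^{\sigma'}$ in $E_i$ satisfies $\sigma\leq\sigma'$, so the new binding covers exactly the fixpoint variables of $E_i$ that $\xi$ failed to cover. Moreover, $\xi'$ still sends every bound typed fixpoint variable to the solution space of its sequent. Since each $E_i$ is proper (being a subterm of the proper term $T$) and well-bound, the induction hypothesis applies and gives $\interp{E_i}{\xi'}=\interps{E_i}$. Summing over $i$ and using properness of $T$ once more yields $\s i{\interp{E_i}{\xi'}}=\s i{\interps{E_i}}=\solfunction\sigma$, which is exactly the fixed-point equation needed.

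The main obstacle is the $\gfp$ case, and specifically the bookkeeping required to apply the induction hypothesis inside the fixed point: one must verify that extending $\xi$ by $X^\sigma\mapsto\solfunction\sigma$ preserves both admissibility (guaranteed by well-boundedness of $T$) and the hypothesis that every variable in the domain is mapped to the solution space of its sequent. Once those side conditions are in place, properness delivers precisely the equation that makes $\solfunction\sigma$ the (unique) fixed point, and the rest of the argument is routine.
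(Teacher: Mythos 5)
Your proof is correct and follows essentially the same route as the paper's: structural induction on $T$, with the variable case handled by Lemma~\ref{lem:solextension}, and the $\gfp$ case resolved by showing that $\interps T=\solfunction\sigma$ (via properness) satisfies the defining fixed-point equation and invoking uniqueness of fixed points, where the induction hypothesis applies to the $E_i$ because the extended environment $\xi\cup[X^\sigma\mapsto\solfunction\sigma]$ is again of the required form. The additional bookkeeping you spell out (admissibility of the extended environment via well-boundedness, properness passing to subterms) is exactly the content the paper's proof sketch leaves implicit.
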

We remark that for any regular $T$, there is exactly one such
environment $\xi$. The case of a closed expression $T$ merits stating
a corollary.
\begin{corollary}\label{cor:simplequal}
  For well-bound, closed and proper $T$, $\interpwe T=\interps T$.
\end{corollary}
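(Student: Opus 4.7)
The plan is to derive the corollary as an immediate consequence of Lemma \ref{lem:simplequal} by instantiating the admissible environment $\xi$ with the empty function. Since $T$ is closed, $\FPV(T) = \emptyset$, so every fixpoint variable in $T$ vacuously satisfies the admissibility condition, and the empty function qualifies as an admissible environment. Moreover, because $\dom(\xi) = \emptyset$, the hypothesis ``for all $X^\sigma \in \dom(\xi)$, $\xi(X^\sigma) = \solfunction\sigma$'' is vacuously true. Thus all hypotheses of Lemma \ref{lem:simplequal} are met.

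The remaining step is simply to connect the notations. By definition, $\interpwe T$ denotes $\interp T \xi$ for the empty environment, which is precisely the instance of $\interp T \xi$ we get from Lemma \ref{lem:simplequal}. Hence $\interpwe T = \interp T \xi = \interps T$, which is the claim. There is no obstacle to overcome here: all the real content lies in Lemma \ref{lem:simplequal} itself (which requires a structural induction on $T$ making critical use of properness at each $\gfp$-subterm and of Lemma \ref{lem:solextension} at the fixpoint-variable case), and the corollary is merely recording the closed special case in the notation $\interpwe{\cdot}$ that was introduced precisely for closed expressions.
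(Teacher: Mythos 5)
Your proposal is correct and matches the paper's (implicit) argument exactly: the paper states the corollary as the closed special case of Lemma~\ref{lem:simplequal}, obtained by taking $\xi$ to be the empty environment, which is admissible for closed $T$ and vacuously satisfies the condition on $\dom(\xi)$. Nothing further is needed.
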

\begin{proof}(of Lemma~\ref{lem:simplequal}) By induction on
  expressions $T$. The variable case needs
  Lemma~\ref{lem:solextension}, lambda-abstraction and tuples are fine
  by the induction hypothesis. For the $\gfpsymb$ case, it has to be
  shown that $\interps T$ fulfills the fixed-point equation defining
  $\interp T\xi$, which suffices by uniqueness of the solution. The
  induction hypothesis can be applied to the elimination alternatives
  since the extended environment in which they have to be interpreted
  is of the required form, just by $T$ being proper.
\end{proof}

The corollary is sufficient for our purposes since
$\finrepempty\sigma$ is not only well-bound and closed, but also
proper, as will be seen shortly.

\begin{theorem}[Equivalence for simplified semantics]\label{thm:FullProp-simpl}
  Let $\sigma$ be a sequent and $\Xi$ as in Def.~\ref{def:finrep} so
  that $\finrep\sigma\Xi$ exists (in particular, this holds for empty
  $\Xi$).
  \begin{enumerate}
  \item $\finrep\sigma\Xi$ is proper.\label{thm:FullProp-simpl.1}
  \item $\interps{\finrep\sigma\Xi}=\solfunction\sigma$.\label{thm:FullProp-simpl.2}
  \end{enumerate}
\end{theorem}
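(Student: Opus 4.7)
The plan is to prove statements (1) and (2) simultaneously by induction on the well-founded recursive definition of $\finrepsymb$ (whose termination is one of the already-established facts about $\finrepempty\sigma$, and which extends to arbitrary $\Xi$). Property (2) is the "main" identity, while property (1) is effectively a strengthening that collects, at every internal $\gfpsymb$-binder, an instance of that same identity.

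In the \emph{matching} case of Definition~\ref{def:finrep}, $\finrep{\seq\Gamma{\vec A\impl p}}{\Xi}=\lb z_1^{A_1}\cdots z_n^{A_n}.X_i^\sigma$ with $\sigma=(\seq\Delta p)$. There is no $\gfpsymb$-subterm at all, so (1) is vacuous. For (2), I unfold the simplified semantics: $\interps{\lb\vec z.X_i^\sigma}=\lb\vec z.\solfunction\sigma$, and this equals $\solfunction{\seq\Gamma{\vec A\impl p}}$ by the very shape of Definition~\ref{def:sol} (the outer abstractions followed by the solution space of the atomic sequent with the extended context).

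In the \emph{otherwise} case, $\finrep{\seq\Gamma{\vec A\impl p}}{\Xi}=\lb\vec z.\gfp\,Y^\sigma.\sum_{(y:\vec B\impl p)\in\Delta}y\fl j{\finrep{\seq\Delta{B_j}}{\Xi,Y{:}\sigma}}$. For (2), I use the simplified semantics (in which $\gfpsymb$ discards its binder) and the inductive hypothesis (2) applied to each recursive call to compute
\[
\interps{\finrep{\seq\Gamma{\vec A\impl p}}{\Xi}}\;=\;\lb\vec z.\s{(y:\vec B\impl p)\in\Delta}{y\fl j{\solfunction{\seq\Delta{B_j}}}}\;=\;\solfunction{\seq\Gamma{\vec A\impl p}},
\]
the last equality being Definition~\ref{def:sol}. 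For (1), the only fresh $\gfpsymb$-subterm introduced at this level is $\gfp\,Y^\sigma.\sum_{\ldots}y\fl j{\finrep{\seq\Delta{B_j}}{\Xi,Y{:}\sigma}}$; its simplified semantics is the sum $\sum_{\ldots}y\fl j{\interps{\finrep{\seq\Delta{B_j}}{\Xi,Y{:}\sigma}}}$, which by the induction hypothesis (2) equals $\sum_{(y:\vec B\impl p)\in\Delta}y\fl j{\solfunction{\seq\Delta{B_j}}}=\solfunction{\seq\Delta p}=\solfunction\sigma$, as required. Properness of any deeper $\gfpsymb$-subterm sitting inside some $\finrep{\seq\Delta{B_j}}{\Xi,Y{:}\sigma}$ is handed down directly by the induction hypothesis (1).

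I do not foresee a real obstacle here: the proof is essentially a bookkeeping exercise. The only delicate point is to carry (1) and (2) jointly, because the properness check at the $\gfpsymb$-node crucially requires the semantic equation (2) for the immediate subexpressions; separating the two statements would leave a circular dependency. The rest is routine, and no appeal to Lemma~\ref{lem:simplequal} or environments is needed since $\interps{\cdot}$ is environment-free.
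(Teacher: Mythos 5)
Your proof is correct and follows essentially the same route as the paper: a simultaneous structural induction on $\finrep\sigma\Xi$, unfolding the simplified semantics in both cases of Definition~\ref{def:finrep}, with item (1) at the $\gfpsymb$-node relying on item (2) for the immediate subterms. One minor quibble with your closing remark: there is no genuine \emph{circular} dependency if the items are separated --- your own argument for (2) never invokes (1), so (2) could be proven first on its own and (1) then derived from it, exactly as the paper observes.
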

\begin{proof}
  Both items together by structural induction on the term
  $\finrep\sigma\Xi$. This all goes by unfolding the definitions and
  use of the induction hypothesis (the main case in the proof of
  \ref{thm:FullProp-simpl.1} needs \ref{thm:FullProp-simpl.2} for
  the subterms, so \ref{thm:FullProp-simpl.1} cannot be proven
  separately before \ref{thm:FullProp-simpl.2}, and the main case
  of \ref{thm:FullProp-simpl.2} immediately follows from the main
  case of \ref{thm:FullProp-simpl.1}, so it is better to prove both
  together, although \ref{thm:FullProp-simpl.2} could be proven
  separately before \ref{thm:FullProp-simpl.1}).
\end{proof}
We remark that the proof is a simplification of the proof for
Theorem~\ref{thm:FullProp} given previously \cite{proofsearch-arXiv}.


\subsection{Deciding type emptiness}\label{sec:emptiness}

We introduce predicate $\allinf T$, for $T$ an expression of $\coolfs$
(B\"{o}hm forest), which holds iff $\allinfmem T$, i.\,e., if the
finite extension of $T$ is empty, but it is defined co-inductively in
Fig.~\ref{fig:allinf}, together (but independently) with the inductive
definition of the predicate $\exfin T$ that is supposed to mean the
negation of $\allinf T$, but which is expressed positively as
existence of a finite member (i.\,e., that the finite extension is
non-empty---that $\exfinmem T$ holds).

	\begin{figure}[tb]\caption{$\exfinsymb$ predicate and $\allinfsymb$ predicate}\label{fig:allinf}
		$$
		\begin{array}{c}
                \infer[]{\exfin{\lambda x^A.N}}{\exfin N}\quad\quad
		\infer[]{\exfin{\sum_iE_i}}{\exfin{E_j}}\quad\quad
		\infer[]{\exfin{x\tuple{N_i}_i}}{\forall i,\,\exfin{N_i}}\\[2ex]
 		\infer=[]{\allinf{\lambda x^A.N}}{\allinf N}\quad\quad
		\infer=[]{\allinf{\sum_iE_i}}{\forall i,\,\allinfa{E_i}}\quad\quad
		\infer=[\forsomej]{\allinfa{x\tuple{N_i}_i}}{\allinf{N_j}}
		\end{array}
		$$
	\end{figure}

\begin{lemma}\label{lem:exfin-allinf}
Given a B\"{o}hm forest $T$, $\exfin T$ iff $\allinf T$ does not hold.
\end{lemma}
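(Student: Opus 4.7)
The plan is to prove the two implications separately, noting that $\exfinsymb$ and $\allinfsymb$ are defined by dual systems of rules (applied inductively vs.\ coinductively), so the lemma is an instance of the generic De Morgan duality between least and greatest fixed points.

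For the direction $\exfin T \Rightarrow \neg\allinf T$, I would proceed by straightforward induction on the derivation of $\exfin T$, using inversion for the coinductive predicate $\allinf$ in each case (which is justified because the defining rules are syntax-directed, so the shape of $T$ determines the only rule whose conclusion can match). For instance, in the lambda case $\exfin{\lambda x^A.N}$ derived from $\exfin N$: assuming $\allinf{\lambda x^A.N}$ for contradiction, inversion yields $\allinf N$, contradicting the induction hypothesis. The sum case uses that if $\allinf{\sum_iE_i}$ holds then $\allinf{E_i}$ holds for \emph{all} $i$, in particular for the witness $j$ coming from the $\exfinsymb$-derivation. The application case is dual: $\exfinsymb$ requires \emph{all} $\exfin{N_i}$, while $\allinfsymb$ only needs some $\allinf{N_j}$, and the induction hypothesis at that $j$ gives the contradiction.

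For the converse $\neg\exfin T \Rightarrow \allinf T$, I would use the coinduction principle associated with $\allinfsymb$. Define the candidate relation $R := \{T \mid \neg\exfin T\}$, and show that $R$ is consistent with the rules of $\allinfsymb$, i.e.\ that for each $T \in R$ the premises of the rule matching the shape of $T$ are themselves in $R$. Case on the shape: if $T = \lambda x^A.N$ and $\neg\exfin T$, then $\neg\exfin N$ (otherwise the lambda-rule for $\exfinsymb$ would derive $\exfin T$), so $N \in R$; if $T = \sum_i E_i$ and $\neg\exfin T$, then $\neg\exfin{E_i}$ for all $i$, so every $E_i \in R$; and if $T = x\tuple{N_i}_i$ and $\neg\exfin T$, then since the inductive rule for applications requires $\exfin{N_i}$ for all $i$, there must exist some $j$ with $\neg\exfin{N_j}$, giving $N_j \in R$ as the witness needed for the coinductive application rule. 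Hence $R \subseteq \allinfsymb$.

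I do not expect a genuine obstacle. The only point demanding attention is the use of inversion on the coinductive predicate $\allinfsymb$ in the first direction: this is valid because the three rules defining $\allinfsymb$ have disjoint syntactic shapes in their conclusions, so for each coterm there is at most one rule whose conclusion can match. The whole argument could equally well be stated as an application of the generic fact that the greatest fixed point of a monotone operator coincides with the complement of the least fixed point of the pointwise-dual operator; the hands-on proof above simply unpacks that fact for the operator implicit in Fig.~\ref{fig:allinf}.
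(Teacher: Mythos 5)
Your proof is correct, and you have in fact anticipated the relationship to the paper's argument in your closing remark: the paper does \emph{not} unpack the duality clause by clause but instead proves, once and abstractly, the lattice-theoretic De Morgan law $\mu F=(\nu(F^\dagger))^\complement$ for an arbitrary monotone $F$ on a powerset (via Tarski's theorem, showing that the complement of $\nu(F^\dagger)$ is a pre-fixed point of $F$ and appealing to $(F^\dagger)^\dagger=F$ for the converse inclusion), and then observes that $\exfinsymb$ and $\allinfsymb$ are generated by pointwise-dual operators. The same generic principle is then reused without further work for Lemma~\ref{lem:finfin-inffin}. Your concrete route---induction on $\exfinsymb$ with inversion on $\allinfsymb$ for one direction, and coinduction with the candidate $R=\{T\mid\neg\exfin T\}$ for the other---is a faithful instantiation of that principle and is sound as written; the only delicate points (validity of inversion on the coinductive predicate, and the classical step extracting a witness $j$ with $\neg\exfin{N_j}$ from the failure of the universally quantified premise in the tuple case) are handled correctly. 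What the paper's abstract version buys is reusability and brevity across the several dual pairs of predicates in the development; what your version buys is a self-contained argument that makes explicit exactly which coinduction principle and which inversion steps are being invoked.
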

\begin{proof}
  See the appendix.
\end{proof}

The following lemma shows that the predicate $\allinfsymb$ corresponds to the intended meaning in terms of the finite extension. Additionally, the lemma shows that the negation of $\allinfsymb$ holds exactly for the B\"{o}hm forests which have finite members.

\begin{lemma}[Coinductive characterization]\label{lem:allinfok}
	Given a B\"{o}hm forest $T$. Then, $\allinf T$ iff 
	$\finext{T}$ is empty, i.\,e., $\allinfsymb=\allinfmemsymb$ as sets of B\"{o}hm forests.
\end{lemma}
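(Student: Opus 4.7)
The plan is to show both directions of the equivalence directly, leveraging the inductive structure of finite $\lambda$-terms in one direction and the coinductive definition of $\allinfsymb$ in the other. (One could instead go through Lemma~\ref{lem:exfin-allinf} and prove the positive variant $\exfinsymb=\exfinmemsymb$ by two plain inductions, but the direct route is more informative.)

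For $(\Rightarrow)$, I would prove the contrapositive: if $\mem t T$ for some $t\in\ol$, then $\allinf T$ does not hold. Since $t$ is finite, this goes by structural induction on $t$. If $t=\lambda x^A.t'$, the shape of the $\mathsf{collect}$ rules forces $T=\lambda x^A.N$ with $\mem{t'}{N}$ (a $\lambda$ cannot be collected into an elimination alternative, and summands are elimination alternatives not $\lambda$-abstractions); the induction hypothesis gives $\neg\allinf N$, and since the only rule concluding $\allinf{\lambda x^A.N}$ requires $\allinf N$, we get $\neg\allinf T$. If $t=x\tuple{t_i}_i$, then either $T=x\tuple{N_i}_i$ with $\mem{t_i}{N_i}$ for all $i$, in which case the IH yields some $N_i$ with $\neg\allinf{N_i}$ and hence $\neg\allinf T$ (the ``for some $j$'' premise cannot be satisfied), or $T=\sum_j E_j$ with $\mem t{E_j}$ for some $j$; the same case analysis, reapplied at $E_j$ (still within the same induction on $t$), yields $\neg\allinf{E_j}$, hence $\neg\allinf T$ because $\allinf{\sum_j E_j}$ would require $\allinf{E_j}$ for every $j$.

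For $(\Leftarrow)$, I would argue coinductively: show that $\mathcal R=\{T\mid \finext T=\emptyset\}$ is a post-fixed point of the $\allinfsymb$ rule functional. Take $T\in\mathcal R$ and analyse its shape. If $T=\lambda x^A.N$, then $\finext N=\emptyset$ (else $\lambda x^A.t'\in\finext T$), so $N\in\mathcal R$. If $T=\sum_i E_i$, each $\finext{E_i}=\emptyset$ (a witness for any summand lifts to $T$ via the third $\mathsf{collect}$ rule), so every $E_i\in\mathcal R$. If $T=x\tuple{N_i}_i$, some $\finext{N_j}$ must be empty: otherwise, using that the tuple is \emph{finite}, pick $t_i\in\finext{N_i}$ for each $i$ and observe $x\tuple{t_i}_i\in\finext T$, contradicting $T\in\mathcal R$. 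Hence $N_j\in\mathcal R$ for that $j$, matching the application rule for $\allinfsymb$. By the coinductive principle $\mathcal R\subseteq\allinfsymb$, which gives the desired implication.

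There is no real obstacle beyond careful shape analysis: the asymmetry in the $\mathsf{collect}$ rules (a $\lambda$ only matches a $\lambda$, an application matches itself or is embedded into a sum, and sums appear only at the term level, never as summands) has to be tracked in both directions. The one genuinely substantive ingredient is the finiteness of argument vectors, which is what makes the product-of-witnesses construction in the backward direction legitimate and thereby forces the ``for some $j$'' clause in the application rule for $\allinfsymb$.
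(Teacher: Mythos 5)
Your proof is correct in substance, and its backward direction coincides with the paper's: a coinduction on $\allinfsymb$, exhibiting $\{T\mid\finext T=\emptyset\}$ as a post-fixed point of the rule functional. The forward direction, however, takes a genuinely different and arguably more elementary route. The paper introduces an auxiliary coinductive predicate $\infinitesymb$ on B\"ohm trees, observes that a B\"ohm tree is a $\lambda$-term iff $\infinitesymb$ fails for it, reformulates the lemma as ``$\allinf T$ iff every member of $T$ is infinite'', and then proves the forward direction by coinduction on $\infinitesymb$. You instead prove the contrapositive by structural induction on the finite witness $t$, which bypasses $\infinitesymb$ and the finite/infinite dichotomy entirely; what the paper's detour buys is a reusable characterization of infinity of B\"ohm trees, what yours buys is self-containedness and a shorter argument. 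One slip to repair in your application case: from $\mem{t_i}{N_i}$ for all $i$, the induction hypothesis gives $\neg\allinf{N_i}$ for \emph{all} $i$, not merely for some $i$, and it is the universal statement you need --- knowing only that some $N_i$ fails to satisfy $\allinfsymb$ would not preclude the ``for some $j$'' premise of the rule for $\allinfa{x\tuple{N_i}_i}$ from being satisfied by a different $j$. With that quantifier corrected, the shape analysis (a $\lambda$-abstraction is only collected into a $\lambda$-abstraction; an application into an elimination alternative, possibly through a summand of a sum) is tracked correctly in both directions, and your remark that finiteness of the argument tuples is what legitimizes the product-of-witnesses step in the backward direction identifies exactly the substantive ingredient.
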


\begin{proof} First, let $\infinite M$ be defined coinductively, as belonging to the greatest predicate $\infinitesymb$ satisfying
$$\infinite{\lambda x^A.M}\Leftrightarrow\infinite M\quad\mbox{and}\quad\infinite{x\tuple{M_i}_i}\Leftrightarrow\exists j,\,\infinite{M_j}$$
This is a characterization of infinity: for a B\"ohm tree $M$, $M$ is a $\lb$-term iff $\infinite M$ does not hold. Now, the statement of the lemma is equivalent to: $\allinf T$ iff $\infinite M$ for all $M$ s.\,t. $\colr MT$.
The ``only if'' is equivalently to: if $\allinf T$ and $\colr M T$ then $\infinite M$. 
This is provable by coinduction on $\infinitesymb$, using the obvious
$\colr MM$ for $M$ in $\cool$. The ``if'' implication is suitable for coinduction on $\allinfsymb$, and this works smoothly.
\end{proof}
Thus, we are authorized to work with $\allinfsymb$ and $\exfinsymb$ in
place of their ``extensional variants'' $\allinfmemsymb$ and
$\exfinmemsymb$.

Next we turn to finitary representation of solution spaces and
consider the predicate $\nbinf T$, for $T$ an expression in
$\olfsfix$, which should hold when there is a finite solution. It is
not obvious from the outset if free fixpoint variables should be
considered as contributing to these finite solutions. If one already
knows that $\exfin{\solfunction\sigma}$ holds, then it would be reasonable to put
$X^\sigma$ into the predicate $\nbinfsymb$. However, since our aim is
to prove $\exfinsymb\circ\solletter$ decidable through decidability of $\nbinfsymb$,
we cannot base rules for $\nbinfsymb$ on a decision concerning
$\exfinsymb\circ\solletter$. Still, once we established decidability of
$\exfinsymb\circ\solletter$, we could profit from a definition of $\nbinfsymb$ that
is sharp in the sense of containing variables $X^\sigma$ by definition
if and only if $\exfin{\solfunction\sigma}$. And this we will do in
Section~\ref{sec:finiteness}, building more complex predicates from
$\nbinfsymb$.

We therefore consider a parameterized notion $\nbinfpsymb P$ with $P$ a
predicate on sequents and instantiate it twice, with
\begin{itemize}
\item once $P:=\emptyset$, the empty predicate which is trivially decidable, and,
\item once $\exfinsymb\circ\solletter$ is proven decidable, with $P:=\exfinsymb\circ\solletter$.
\end{itemize}
The general proviso on $P$ is decidability of $P$ and that, for all
sequents $\sigma$, $P(\sigma)$ implies $\exfin{\solfunction\sigma}$,
i.\,e., $P\subseteq\exfinsymb\circ\solletter$. This proviso is
trivially satisfied in both instantiations.\footnote{In a previous
  version of this paper, $P$ was accidentally set to the always true
  predicate, in order to solve a problem of extensionality of a
  predicate that was used to deal with $\finhabproblem$. That was an
  error and led to incorrect proofs. We found this out by ourselves,
  but we also received a counterexample from Micha{\l} Ziobro in
  January 2017 which we gratefully acknowledge.} There are still other
meaningful parameter settings, e.\,g., with $P(\sigma)$ iff
$\sigma=(\seq\Gamma p)$ is instance of an axiom, i.\,e., $p\in|\Gamma|$.

The definition of this (parameterized) predicate $\nbinfpsymb P$ is inductive and
presented in the first line of Fig.~\ref{fig:nbinf}, although it is
clear that it could equivalently be given by a definition by recursion
over the term structure. Therefore, the predicate $\nbinfpsymb P$ is
decidable.

\begin{figure}[tb]\caption{$\nbinfpsymb P$ predicate and $\binfpsymb P$ predicate}\label{fig:nbinf}
		$$
		\begin{array}{c}
                \infer[]{\nbinfp P{X^\sigma}}{P(\sigma)}\quad\quad
		\infer[]{\nbinfp P{\lambda x^A.N}}{\nbinfp P N}\quad\quad
		\infer[\forsomej]{\nbinfp P{\gfp X^\sigma.\sum_iE_i}}{\nbinfp P{E_j}}\quad\quad
		\infer[]{\nbinfp P{x\tuple{N_i}_i}}{\forall i,\,\nbinfp P{N_i}}\\[3ex]
                \infer[]{\binfp P{X^\sigma}}{\neg P(\sigma)}\quad\quad
                \infer[]{\binfp P{\lambda x^A.N}}{\binfp P N}\quad\quad
		\infer[]{\binfp P{\gfp X^\sigma.\sum_iE_i}}{\forall i,\,\binfp P{E_i}}\quad\quad
		\infer[\forsomej]{\binfp P{x\tuple{N_i}_i}}{\binfp P{N_j}}

		\end{array}
		$$
	\end{figure}

The inductive characterization of the negation of the predicate $\nbinfpsymb P$ is easy, as all the rules of $\nbinfpsymb P$ are ``invertible'', and is given in the second line of Fig.~\ref{fig:nbinf}.

	\begin{lemma}\label{lem:binf-nbinf}
		For all $T\in\olfsfix$, $\binfp P T$ iff  $\nbinfp P T$ does not hold.
	\end{lemma}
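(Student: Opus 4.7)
The plan is to proceed by straightforward structural induction on $T\in\olfsfix$, covering both terms and elimination alternatives uniformly. The key observation is that the two predicates $\binfpsymb P$ and $\nbinfpsymb P$ are defined as exact De Morgan duals of each other at every syntactic construct: where $\nbinfpsymb P$ uses $P(\sigma)$, $\binfpsymb P$ uses $\neg P(\sigma)$; where $\nbinfpsymb P$ demands the property for \emph{all} summands of a $\gfpsymb$ construction, $\binfpsymb P$ asks for it for \emph{some}; and dually for the arguments in an application/elimination alternative. Both are inductive predicates defined by syntax-directed rules, so the structural induction is well-founded and each induction step will simply combine the De Morgan dualities with the induction hypothesis.

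I would handle the four cases as follows. In the base case $T=X^\sigma$, both directions follow immediately from the definitions, since $\binfp P{X^\sigma}\Leftrightarrow\neg P(\sigma)\Leftrightarrow \neg\nbinfp P{X^\sigma}$ (the only way $\nbinfp P{X^\sigma}$ can hold is via $P(\sigma)$). For $T=\lambda x^A.N$, the induction hypothesis directly gives $\binfp P{\lambda x^A.N}\Leftrightarrow\binfp P N\Leftrightarrow\neg\nbinfp P N\Leftrightarrow\neg\nbinfp P{\lambda x^A.N}$. For $T=x\tuple{N_i}_i$, one chains $\binfp P{x\tuple{N_i}_i}\Leftrightarrow \exists j,\,\binfp P{N_j}\Leftrightarrow\exists j,\,\neg\nbinfp P{N_j}\Leftrightarrow\neg(\forall i,\,\nbinfp P{N_i})\Leftrightarrow\neg\nbinfp P{x\tuple{N_i}_i}$. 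The $\gfpsymb$ case is symmetric: $\binfp P{\gfp X^\sigma.\sum_iE_i}\Leftrightarrow\forall i,\,\binfp P{E_i}\Leftrightarrow\forall i,\,\neg\nbinfp P{E_i}\Leftrightarrow\neg(\exists j,\,\nbinfp P{E_j})\Leftrightarrow\neg\nbinfp P{\gfp X^\sigma.\sum_iE_i}$.

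I do not expect any real obstacle here: the rules of $\nbinfpsymb P$ are all ``invertible'' (as the paragraph before the lemma already notes), meaning that the premises of each rule collectively express necessary and sufficient conditions for the conclusion, given the top-level shape of $T$. The only minor point to be mindful of is the implicit use of classical reasoning when pushing negations through the universal/existential quantifiers over the finite index sets, but since these quantifications are finite this is unproblematic; if one wanted to stay constructive, one would rely on decidability of $P$ (which is part of the general proviso) to obtain decidability of $\nbinfpsymb P$ by structural recursion on $T$, from which the equivalence with classical negation follows.
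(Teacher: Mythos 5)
Your proof is correct and follows exactly the route the paper takes: a routine structural induction on $T$, applying De Morgan's laws case by case (the paper's own proof is just the one-line remark that this is a routine induction amounting to De Morgan dualization of the recursive definitions). Your additional observations about invertibility and the role of decidability of $P$ for a constructive reading are accurate but not needed beyond what the paper states.
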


	\begin{proposition}
		[Finitary characterization]\label{prop:allinf}	
                Let $P$ satisfy $P\subseteq\exfinsymb\circ\solletter$ (this is part of the general proviso on $P$).	
                \begin{enumerate}
                \item If $\nbinfp P T$ then $\exfin{\interps T}$.\label{prop:allinf.1}
                \item Let $T\in\olfsfix$ be well-bound and proper.  If\label{prop:allinf.2}
                  $\binfp P T$ and for all $X^\sigma\in\FPV(T)$, $\exfin{\solfunction\sigma}$ implies $P(\sigma)$,
                  then $\allinf{\interps T}$.
 		\end{enumerate}
	\end{proposition}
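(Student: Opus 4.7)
My plan is to prove the two items separately, using induction on the finitary syntax for (1) and a combination of coinduction on $\allinfsymb$ with case analysis on $T$ for (2).

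For part (1), I would induct on the derivation of $\nbinfp P T$ (equivalently, on the structure of $T$, since $\nbinfp P{\cdot}$ is definable by recursion). The base case $\nbinfp P{X^\sigma}$ rests on $P(\sigma)$ together with the proviso $P\subseteq\exfinsymb\circ\solletter$, which immediately yields $\exfin{\solfunction\sigma}=\exfin{\interps{X^\sigma}}$. The remaining three cases exploit that the simplified semantics commutes with each constructor: $\interps{\lambda x^A.N}=\lambda x^A.\interps N$, $\interps{\gfp X^\sigma.\sum_iE_i}=\sum_i\interps{E_i}$, and $\interps{x\fl k{N_k}}=x\tuple{\interps{N_k}}_k$. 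Hence the matching inductive rule of $\exfinsymb$ in Figure~\ref{fig:allinf} combines with the induction hypothesis to close each case.

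For part (2) the argument is coinductive against $\allinfsymb$, with invariant $R$ consisting of all $\interps T$ where $T$ satisfies the stated hypotheses. The case analysis follows the top-level constructor of $T$. The variable case $T=X^\sigma$ is essentially closed by Lemma~\ref{lem:exfin-allinf}: $\binfp P{X^\sigma}$ delivers $\neg P(\sigma)$, the FPV-condition then forces $\neg\exfin{\solfunction\sigma}$, whence $\allinf{\solfunction\sigma}=\allinf{\interps{X^\sigma}}$. The $\lambda$-case and the application case are routine; the relevant subterm ($N$ or some $N_j$ picked by the inductive rule for $\binfp P$) inherits well-boundedness, properness, $\binfp P$, and the FPV-condition verbatim, and the matching coinductive rule for $\allinfsymb$ applies.

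The main obstacle is the $\gfp$-case $T=\gfp X^\sigma.\sum_iE_i$. By properness, $\interps T=\sum_i\interps{E_i}$, so the sum-rule of $\allinfsymb$ requires each $\interps{E_i}$ to lie in $R$. Each $E_i$ inherits well-boundedness, properness, and $\binfp P{E_i}$, but the FPV-condition can fail on $E_i$: the $\gfp$-binding releases new free occurrences of $X^{\sigma'}$ with $\sigma\leq\sigma'$, and the proviso does not in general make $\exfin{\solfunction{\sigma'}}\Rightarrow P(\sigma')$. To close the loop I would enlarge the invariant by an auxiliary parameter $\mathcal{U}$ of ``deferred'' sequents, with the clause that for $\tau\in\mathcal{U}$ the condition $\exfin{\solfunction\tau}\Rightarrow P(\tau)$ may be replaced by the obligation that $\solfunction\tau$ itself lies in the (coinductively interpreted) enriched invariant. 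On descent into $E_i$, the offending $\sigma'$ are added to $\mathcal{U}$, and Lemma~\ref{lem:solextension} is available to relate $\solfunction{\sigma'}$ to $[\sigma'/\sigma]\solfunction\sigma$. Productivity of the coinduction holds because any use of a deferred hypothesis for $\sigma'$ lies strictly beneath the head application-constructor $y\tuple{\cdot}_k$ of $\interps{E_i}$. This interplay between the syntactic induction driving $\binfp P$ and the semantic coinduction driving $\allinfsymb$, forced apart by the new free variables at $\gfp$-binders, is the technically delicate part of the argument.
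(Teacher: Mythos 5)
Your part (1) coincides with the paper's proof. For part (2), the paper organizes the argument as a structural induction on $\binfpsymb P$ (equivalently on $T$) with an \emph{embedded} coinduction on $\allinfsymb$ started afresh at each $\gfpsymb$-node, whereas you run one global coinduction on $\allinfsymb$ whose invariant carries an accumulator $\mathcal{U}$ of deferred sequents; when made precise, your deferred obligations are exactly the coinductive hypotheses $\allinf{\solfunction\sigma}$ of the paper's embedded coinductions, so the two proofs rest on the same ideas (properness, Lemma~\ref{lem:solextension}, guardedness under the tuple constructor of each elimination alternative) and differ only in how the induction/coinduction nesting is packaged --- your version avoids restating the guardedness justification at every $\gfpsymb$-node at the price of a self-referential invariant that must itself be interpreted coinductively. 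Two points need tightening. First, to discharge a deferred obligation for $\sigma'$ you invoke only Lemma~\ref{lem:solextension}, which gives $\solfunction{\sigma'}=[\sigma'/\sigma]\solfunction\sigma$; you additionally need that $\allinfsymb$ (equivalently $\exfinsymb$) is invariant under co-contraction, i.e., $\allinf{\solfunction\sigma}$ implies $\allinf{[\sigma'/\sigma]\solfunction\sigma}$ --- this is the paper's Lemma~\ref{lem:invcocontraction} and is a genuinely required ingredient, not a formality. Second, the equation $\interps{\gfp X^\sigma.\sum_iE_i}=\sum_i\interps{E_i}$ is just the definition of the simplified semantics; what properness actually buys you, and what your deferred mechanism silently relies on to tie $\solfunction{\sigma'}$ back to an element of the invariant, is the further identity $\interps{\gfp X^\sigma.\sum_iE_i}=\solfunction\sigma$. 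With these two repairs your argument goes through.
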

	\begin{proof}

          \ref{prop:allinf.1}. is proved by induction on the predicate
          $\nbinfpsymb P$ (or, equivalently, on $T$). The base case for fixed-point variables
          needs the proviso on $P$, and all other cases are immediate
          by the induction hypothesis.

          \ref{prop:allinf.2}. is proved by induction on the predicate
          $\binfpsymb P$ (or, equivalently, on $T$)---the case relative to fixpoints is based on $T$ being proper and needs an
          inner co-induction and also the fact that $\exfinsymb$ is
          invariant under co-contraction. For details, see the appendix.
 	\end{proof}

\begin{theorem}[Decidability of existence of inhabitants in $\ol$]
	\label{thm:decide-fin-inhab}\quad
	\begin{enumerate}
		\item \label{thm:decide-fin-inhab.1}
                  Let $P$ satisfy $P\subseteq\exfinsymb\circ\solletter$ (this is part of the general proviso on $P$).
                  For any $T\in\olfsfix$  well-bound, proper and closed,  $\nbinfp P T$ iff  $\exfin{\interps T}$.
                \item $\exfin{\solfunction\sigma}$ is decidable, by deciding $\nbinfp \emptyset{\finrepempty\sigma}$.\label{thm:decide-fin-inhab.2}
		\item In other words, $\inhabproblem$ is decidable.\label{thm:decide-fin-inhab.3}
	\end{enumerate}
\end{theorem}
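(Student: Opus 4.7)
The plan is to prove the three items in sequence, treating \ref{thm:decide-fin-inhab.1} as the technical core and \ref{thm:decide-fin-inhab.2} and \ref{thm:decide-fin-inhab.3} as applications. For \ref{thm:decide-fin-inhab.1}, the forward implication is exactly Proposition~\ref{prop:allinf}.\ref{prop:allinf.1}, which needs no additional assumption beyond the general proviso on $P$ (not even well-boundedness, properness or closedness of $T$). For the converse, I would argue by contrapositive: if $\nbinfp P T$ fails, then by Lemma~\ref{lem:binf-nbinf} we have $\binfp P T$. Since $T$ is closed, $\FPV(T)=\emptyset$, so the side condition of Proposition~\ref{prop:allinf}.\ref{prop:allinf.2} (``for all $X^\sigma\in\FPV(T)$, $\exfin{\solfunction\sigma}$ implies $P(\sigma)$'') is vacuously true. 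Applying that proposition using well-boundedness and properness of $T$ yields $\allinf{\interps T}$, and by Lemma~\ref{lem:exfin-allinf} this is equivalent to $\exfin{\interps T}$ failing.

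For \ref{thm:decide-fin-inhab.2}, I would instantiate \ref{thm:decide-fin-inhab.1} with $P:=\emptyset$, observing that the proviso $P\subseteq\exfinsymb\circ\solletter$ holds trivially. The term $\finrepempty\sigma$ is closed and well-bound (stated after Definition~\ref{def:finrep}) and proper by Theorem~\ref{thm:FullProp-simpl}.\ref{thm:FullProp-simpl.1}, so the hypothesis of \ref{thm:decide-fin-inhab.1} is met. Combined with Theorem~\ref{thm:FullProp-simpl}.\ref{thm:FullProp-simpl.2}, which gives $\interps{\finrepempty\sigma}=\solfunction\sigma$, we obtain the equivalence $\nbinfp\emptyset{\finrepempty\sigma}\Leftrightarrow\exfin{\solfunction\sigma}$. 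Decidability then follows because $\finrepempty\sigma$ is effectively computable (stated after Definition~\ref{def:finrep}) and the predicate $\nbinfpsymb\emptyset$ is syntax-directed on $\olfsfix$-terms, hence decidable by structural recursion (as already observed before Proposition~\ref{prop:allinf}).

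For \ref{thm:decide-fin-inhab.3}, I would combine Lemma~\ref{lemma:char-inhab} with \ref{thm:decide-fin-inhab.2} via a short chain: given $\Gamma$ and $A$, the question ``is $\inhab\Gamma A$ nonempty?'' is equivalent to $\exfinmem{\interpwe{\finrepempty{\seq\Gamma A}}}$ by Lemma~\ref{lemma:char-inhab}; by Corollary~\ref{cor:simplequal} applied to the closed, well-bound, proper term $\finrepempty{\seq\Gamma A}$ this equals $\exfinmem{\interps{\finrepempty{\seq\Gamma A}}}$; unfolding the definition of $\exfinmemsymb$ and using Lemma~\ref{lem:allinfok} together with Lemma~\ref{lem:exfin-allinf} this rewrites to $\exfin{\interps{\finrepempty{\seq\Gamma A}}}$; finally by Theorem~\ref{thm:FullProp-simpl}.\ref{thm:FullProp-simpl.2} this is $\exfin{\solfunction{\seq\Gamma A}}$, which is decidable by \ref{thm:decide-fin-inhab.2}.

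The main obstacle, I expect, is not any single calculation but rather careful bookkeeping: making sure the passage from the extensional predicates $\exfinmemsymb,\allinfmemsymb$ to their (co)inductive counterparts $\exfinsymb,\allinfsymb$ is justified at each step, and that at the point where Proposition~\ref{prop:allinf}.\ref{prop:allinf.2} is invoked the hypotheses on $T$ (well-boundedness, properness, the $\FPV$-side condition) are all genuinely in hand --- here closedness of $T$ is doing real work by discharging the side condition for free, which is precisely why $P:=\emptyset$ suffices in this first round.
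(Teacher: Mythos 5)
Your proposal is correct and follows essentially the same route as the paper: item 1 from both parts of Proposition~\ref{prop:allinf} together with Lemmas~\ref{lem:exfin-allinf} and \ref{lem:binf-nbinf} (closedness discharging the $\FPV$ side condition), item 2 from both parts of Theorem~\ref{thm:FullProp-simpl}, and item 3 by reduction to $\exfin{\solfunction{\seq\Gamma A}}$. The only cosmetic difference is in item 3, where you route through Lemma~\ref{lemma:char-inhab} and Corollary~\ref{cor:simplequal} while the paper applies Proposition~\ref{prop:properties-of-S}.\ref{prop:properties-of-S.3} and $\exfinmemsymb=\exfinsymb$ directly; the underlying facts are the same.
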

\begin{proof} \ref{thm:decide-fin-inhab.1}. Follows from both parts of Prop.~\ref{prop:allinf},
  Lemmas~\ref{lem:exfin-allinf} and \ref{lem:binf-nbinf}, and the
  fact that, trivially, the extra condition in Prop.~\ref{prop:allinf}.\ref{prop:allinf.2} is satisfied for closed
  terms.
	
   \ref{thm:decide-fin-inhab.2}. Apply \ref{thm:decide-fin-inhab.1}.~with both parts of Theorem~\ref{thm:FullProp-simpl}.

   \ref{thm:decide-fin-inhab.3}.
   Analogously to the proof of Lemma~\ref{lemma:char-inhab}, apply
   Prop.~\ref{prop:properties-of-S}.\ref{prop:properties-of-S.3}~(and
   $\exfinmemsymb=\exfinsymb$).
\end{proof}

\begin{definition}
  Let the predicates $\nbinfcansymb$ and $\binfcansymb$ on
  $\olfsfix$ be defined by $\nbinfcansymb:=\nbinfpsymb P$ and
  $\binfcansymb:=\binfpsymb P$ for $P:=\exfinsymb\circ\solletter$,
  which satisfies the proviso by
  Theorem~\ref{thm:decide-fin-inhab}.\ref{thm:decide-fin-inhab.2}. In particular,
  $\nbinfcansymb$ and $\binfcansymb$ are decidable.
\end{definition}

Prop.~\ref{prop:allinf}.\ref{prop:allinf.2} gives that $\binfcan T$ implies $\allinf{\interps T}$ for all well-bound and proper expressions $T$.
However, an inspection of the proof of that lemma even shows that the latter two properties are not needed:

\begin{lemma}[Sharp finitary characterization]\label{prop:allinf-sharp}
  For all $T\in\olfsfix$, $\nbinfcan T$ iff  $\exfin{\interps T}$.
\end{lemma}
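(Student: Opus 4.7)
The plan is to exploit two features that become available once we specialize to $P := \exfinsymb\circ\solletter$: first, the proviso $P\subseteq\exfinsymb\circ\solletter$ holds trivially as an equality; second, the base-case condition $\neg P(\sigma)$ in $\binfcansymb$ yields exactly $\allinf{\solfunction\sigma}$, which is exactly $\allinf{\interps{X^\sigma}}$. This makes the fixed-point case of the usual proof collapse, removing the need for well-boundedness and properness.

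For the direction $\nbinfcan T \Rightarrow \exfin{\interps T}$, I would just apply Prop.~\ref{prop:allinf}.\ref{prop:allinf.1} to $P=\exfinsymb\circ\solletter$, which satisfies the proviso trivially.

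For the converse, I would go through the contrapositive, using Lemma~\ref{lem:exfin-allinf} and Lemma~\ref{lem:binf-nbinf} to restate the goal as $\binfcan T \Rightarrow \allinf{\interps T}$, and proceed by induction on $T$ (equivalently, on the derivation of $\binfcansymb$, whose rules are syntax-directed). In the variable case $T=X^\sigma$, the hypothesis $\binfcan{X^\sigma}$ unfolds to $\neg\exfin{\solfunction\sigma}$, so Lemma~\ref{lem:exfin-allinf} yields $\allinf{\solfunction\sigma}$, which matches $\interps{X^\sigma}=\solfunction\sigma$. The $\lambda$-case and the application case close by the induction hypothesis and the corresponding co-inductive rules of $\allinfsymb$. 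The key point is the $\gfpsymb$-case: by the simplified semantics, $\interps{\gfp X^\sigma.\sum_i E_i}=\sum_i\interps{E_i}$, so the premise $\forall i,\,\binfcan{E_i}$ together with the induction hypothesis gives $\forall i,\,\allinf{\interps{E_i}}$, and the coinductive sum-rule of $\allinfsymb$ yields $\allinf{\interps T}$ immediately.

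The step I expect to be the only subtle point is acknowledging why this argument is strictly simpler than the proof of Prop.~\ref{prop:allinf}.\ref{prop:allinf.2}: there, the fixed-point case needed properness (to connect the simplified interpretation to the environment-based one defining $\interpwe\cdot$) plus an inner co-induction and invariance of $\exfinsymb$ under co-contraction (to handle occurrences of bound fixed-point variables). Here, because we work directly with $\interps\cdot$ and because the syntactic $\gfpsymb$ is interpreted by the simplified semantics as a plain sum, the inductive step of $\binfcansymb$ maps one-to-one onto the coinductive rule of $\allinfsymb$, and no side conditions on $T$ are required. The same is true of the base case, which used the proviso $P\subseteq\exfinsymb\circ\solletter$ in one direction and the extra hypothesis ``$\exfin{\solfunction\sigma}$ implies $P(\sigma)$'' in the other---both now degenerate to equality by our choice of $P$.
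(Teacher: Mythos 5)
Your proposal is correct and matches the paper's own proof essentially step for step: the left-to-right direction is delegated to Prop.~\ref{prop:allinf}.\ref{prop:allinf.1} with $P:=\exfinsymb\circ\solletter$, and the converse is proven in contrapositive form as $\binfcan T\Rightarrow\allinf{\interps T}$ by plain induction on $\binfcansymb$, with the $\gfpsymb$-case collapsing to the sum rule of $\allinfsymb$ because the simplified semantics interprets $\gfpsymb$ as a plain sum. Your closing remarks on why properness, well-boundedness, and the inner coinduction of Prop.~\ref{prop:allinf}.\ref{prop:allinf.2} are no longer needed are exactly the observation the paper makes when introducing this lemma.
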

\begin{proof}
  See the appendix.
\end{proof}
In particular, $\exfin{\interps T}$ is decidable, by deciding $\nbinfcan T$.


\subsection{Deciding type finiteness}\label{sec:finiteness}

Now a second and more difficult instance of the programme laid out in the beginning of Section \ref{sec:inhab}.

We will now characterize the
predicate $\finfinmemsymb$ by an inductively defined predicate
$\finfinsymb$. Generically, we will obtain a characterization of its
negation $\inffinmemsymb$ by the coinductively defined dual
$\inffinsymb$ of $\finfinsymb$. The inductive definition of $\finfinsymb$ is given in the first line of Fig.~\ref{fig:finfin}.
Notice that, while $\finfinsymb$ is inductively defined and has only finitely many premisses in each clause,
there is absolutely no claim on decidability since the coinductively defined predicate $\allinfsymb$ enters the premisses.

\begin{figure}[tb]\caption{$\finfinsymb$ predicate and $\inffinsymb$ predicate}\label{fig:finfin}
$$
\begin{array}{c}
\infer[]{\finfin{\lambda x^A.N}}{\allinf N}\quad\quad
\infer[]{\finfin{\lambda x^A.N}}{\finfin N}\quad\quad
\infer[]{\finfin{\sum_iE_i}}{\forall i,\,\finfin{E_i}}\quad\quad
\infer[\forsomej]{\finfin{x\tuple{N_i}_i}}{\allinf{N_j}}\quad\quad
\infer[]{\finfin{x\tuple{N_i}_i}}{\forall i,\,\finfin{N_i}}\\[3ex]
\infer=[]{\inffin{\lambda x^A.N}}{\exfin{N}\quad\inffin N}\quad\quad
\infer=[\forsomej]{\inffin{\sum_iE_i}}{\inffin{E_j}}\quad\quad
\infer=[\forsomej]{\inffin{x\tuple{N_i}_i}}{\forall i,\,\exfin{N_i}\quad\inffin{N_j}}
\end{array}
$$
\end{figure}
By inversion (decomposing the summands into tuples) on $\allinfsymb$,
one can show that $\allinfsymb\subseteq\finfinsymb$ (which corresponds semantically to the trivial $\allinfmemsymb\subseteq\finfinmemsymb$). Thus, in particular, no clause
pertaining to $\allinfsymb$ is necessary for the definition of
$\finfin{\sum_iE_i}$.
We now show that $\finfinsymb$ is sound and complete in terms of membership.
\begin{lemma}[Coinductive characterization]\label{lem:finfinok} Given a B\"{o}hm forest $T$. Then, $\finfin T$ iff
$\finext{T}$ is finite, i.\,e., $\finfinsymb=\finfinmemsymb$ as sets of B\"{o}hm forests.
\end{lemma}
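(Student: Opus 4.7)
The plan is to prove the two directions separately. The forward direction ($\finfin T$ implies $\finext{T}$ finite) will be a routine structural induction on the derivation of $\finfin T$. The converse will use the De Morgan duality between the inductively defined $\finfinsymb$ and the coinductively defined $\inffinsymb$, reducing the claim to showing that $\inffin T$ forces $\finext{T}$ to be infinite; this I will establish by exhibiting, for every $n$, a finite member of $T$ of size at least $n$.

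For the forward direction, I would induct on the rules defining $\finfin T$. In the two abstraction sub-cases, either $\allinf N$ holds, in which case $\finext{N}=\emptyset$ by Lemma~\ref{lem:allinfok} and hence $\finext{\lambda x^A.N}=\emptyset$; or $\finfin N$ holds, in which case the IH yields $\finext{N}$ finite and $\finext{\lambda x^A.N}$ is finite by the evident bijection. For the sum case, $\finext{\sum_iE_i}=\bigcup_i\finext{E_i}$ is a finite union of finite sets by the IH. The two tuple sub-cases are analogous: either some $\allinf{N_j}$ forces $\finext{x\tuple{N_i}_i}=\emptyset$, or all $\finext{N_i}$ are finite by the IH and $\finext{x\tuple{N_i}_i}$ is in bijection with their finite cartesian product.

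For the converse, I would first record that $\finfinsymb$ and $\inffinsymb$ are De Morgan duals, i.\,e.\ $\finfin T$ holds iff $\inffin T$ does not. Inspection of the two rule systems, using Lemma~\ref{lem:exfin-allinf} to identify $\exfin N$ with $\neg\allinf N$, shows that the defining operators are De Morgan dual, so the least fixed point $\finfinsymb$ and the greatest fixed point $\inffinsymb$ are complementary; an elementary direct proof by induction on $\finfinsymb$ and coinduction on $\inffinsymb$ is also available, in the spirit of Lemma~\ref{lem:binf-nbinf}. It then suffices to show: $\inffin T$ implies $\finext{T}$ is infinite. For this I would prove, by induction on $n\in\tnat$, the sub-lemma that $\inffin T$ entails the existence of some $t\in\finext{T}$ of size at least $n$. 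The base case ($n=0$) follows from the easy fact $\inffin T\Rightarrow\exfin T$, established by one-step case analysis on the rule for $\inffin T$ (the sum case uses that summands are elimination alternatives $x\tuple{N_i}_i$), combined with Lemmas~\ref{lem:exfin-allinf} and~\ref{lem:allinfok} to give $\finext{T}\neq\emptyset$. For the inductive step, I invert $\inffin T$ according to the shape of $T$: for $T=\lambda x^A.N$, apply the IH to $N$ and prefix by $\lambda x^A$; for $T=x\tuple{N_i}_i$, combine a term obtained from the IH applied to the $N_k$ with $\inffin{N_k}$ with arbitrary finite members of the other $N_i$ (available from $\exfin{N_i}$), the head variable contributing the extra unit of size.

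The main obstacle is the sum case of the inductive step, because passing from $\sum_iE_i$ to a summand $E_j$ with $\inffin{E_j}$ consumes no constructor and therefore cannot directly boost the size by one. The fix is to exploit that $E_j$ must be of the form $x\tuple{N_i}_i$, so that one further inversion yields $\inffin{N_k}$ for some $k$ and $\exfin{N_i}$ for all $i$, and the tuple-case construction above supplies the missing unit of size through the head variable of $E_j$. Since $\finext{T}$ then contains members of every size and distinct sizes give distinct terms, $\finext{T}$ is infinite, completing the converse direction.
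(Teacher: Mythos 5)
Your proof is correct, and your soundness direction coincides with the paper's (induction on $\finfinsymb$ using Lemma~\ref{lem:allinfok}). For completeness, however, you take a genuinely different route. The paper assumes $\finext{T}$ finite and performs induction on the sum of the heights of all finite members of $T$---a measure that only exists under the finiteness assumption---building the $\finfinsymb$-derivation directly; there, the rules with an $\allinfsymb$-premiss handle exactly the subterms whose finite extension is empty (where the measure cannot decrease), and the sum case is handled by decomposing summands into tuples. You instead prove the contrapositive: you invoke the De Morgan duality $\finfinsymb=\inffinsymb^\complement$ (this is Lemma~\ref{lem:finfin-inffin}, stated after Lemma~\ref{lem:finfinok} in the paper, but its proof is purely lattice-theoretic and independent of it, so there is no circularity) and then show, by induction on $n$, that $\inffin T$ yields a finite member of size at least $n$, with the head variable of a summand supplying the missing size increment in the sum case---the one delicate point, which you identify and resolve correctly. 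Your pumping-style argument is slightly longer in that it needs the duality lemma and the auxiliary inclusion $\inffinsymb\subseteq\exfinsymb$, but it buys an explicit construction of arbitrarily large inhabitants and, as a by-product, the inclusion $\inffinsymb\subseteq\inffinmemsymb$ that the paper only records as a corollary afterwards. Both arguments are sound.
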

\begin{proof}
  The direction from left to right (``soundness'') is immediate by induction on $\finfinsymb$, using Lemma~\ref{lem:allinfok}. From right to left, we do induction on the sum of the term heights of all finite members, which is a finite measure. The first and fourth rule of $\finfinsymb$ are necessary to capture the cases when one passes from $\lambda$-abstractions to their bodies resp.~from tuples to their components---thus when the individual heights decrease---but when there is just no element whose height decreases. The case of sums of elimination alternatives needs a further decomposition into tuples, in order to be able to apply the inductive hypothesis.
\end{proof}
Combined with Lemma \ref{lem:allinfok}, this gives an alternative proof of $\allinfsymb\subseteq\finfinsymb$.

The announced coinductive definition $\inffinsymb$ that is meant to characterize $\inffinmemsymb$ is found in the second line of Fig.~\ref{fig:finfin}.

\begin{lemma}\label{lem:finfin-inffin}
Given a B\"{o}hm forest $T$, $\finfin T$ iff $\inffin T$ does not hold.
\end{lemma}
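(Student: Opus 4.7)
My plan is to prove this by the standard De Morgan duality between the inductive predicate $\finfinsymb$ and its coinductive dual $\inffinsymb$, using Lemma~\ref{lem:exfin-allinf} as the bridge that turns $\neg\allinf{\cdot}$ into $\exfin{\cdot}$ (and conversely). A direct inspection shows that the rules of $\inffinsymb$ are precisely what one obtains by (i) grouping the $\finfinsymb$ rules by head constructor, (ii) negating the resulting disjunction of premises, and (iii) rewriting each $\neg\allinfsymb$ as $\exfinsymb$. This makes the duality structural and each case should go through mechanically.

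For the direction $\finfin T \Rightarrow \neg\inffin T$, I would proceed by induction on the derivation of $\finfin T$. Assuming $\inffin T$ for contradiction, I would invert the (unique, by head constructor) coinductive clause of $\inffinsymb$ applicable to $T$ and derive a contradiction with the premise of the $\finfinsymb$ rule used. Concretely: for $\finfin{\lambda x^A.N}$ from $\allinf N$, inversion of $\inffin{\lambda x^A.N}$ yields $\exfin N$, which together with $\allinf N$ contradicts Lemma~\ref{lem:exfin-allinf}; for $\finfin{\lambda x^A.N}$ from $\finfin N$, inversion yields $\inffin N$ and the induction hypothesis closes the case. The case of a sum follows since inversion of $\inffin{\sum_iE_i}$ gives some $j$ with $\inffin{E_j}$, against $\forall i,\,\finfin{E_i}$ via induction hypothesis. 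The two tuple cases are handled identically, again using Lemma~\ref{lem:exfin-allinf} for the rule with premise $\allinf{N_j}$, and the induction hypothesis for the rule with premise $\forall i,\,\finfin{N_i}$.

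For the direction $\neg\finfin T \Rightarrow \inffin T$, I would use coinduction on $\inffinsymb$, showing that the complement predicate $\mathcal{R}(T) := \neg\finfin T$ is closed under the coinductive clauses. Case split on the head of $T$: if $T=\lambda x^A.N$, then $\neg\finfin T$ together with the two $\finfinsymb$ rules for abstractions give $\neg\allinf N$ and $\neg\finfin N$; by Lemma~\ref{lem:exfin-allinf} the first is $\exfin N$, and the second is exactly $\mathcal{R}(N)$, matching the $\inffinsymb$ clause for abstractions. If $T=\sum_iE_i$, negating the $\finfinsymb$ sum rule yields some $j$ with $\neg\finfin{E_j}$, as required. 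If $T=x\tuple{N_i}_i$, negating both applicable $\finfinsymb$ rules yields $\forall j,\,\neg\allinf{N_j}$ and $\exists j,\,\neg\finfin{N_j}$; Lemma~\ref{lem:exfin-allinf} converts the first to $\forall j,\,\exfin{N_j}$, and the second is $\mathcal{R}(N_j)$ for the witnessing index, matching the $\inffinsymb$ clause for tuples. The principal obstacle is bookkeeping rather than depth: making sure each coinductive clause's side condition $\exfin{\cdot}$ is discharged by a non-coinductive appeal to Lemma~\ref{lem:exfin-allinf}, while only the remaining $\inffinsymb$ premises are handled by the coinductive hypothesis.
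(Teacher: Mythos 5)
Your proposal is correct and matches the paper's argument: the paper also observes that $\inffinsymb$ is exactly the De Morgan dual of $\finfinsymb$ (with the $\exfin{\cdot}$ side conditions arising from rewriting $\neg\allinfsymb$ via Lemma~\ref{lem:exfin-allinf}) and invokes the generic duality $\mu F=(\nu F^\dualsymb)^\complement$ recalled in the proof of Lemma~\ref{lem:exfin-allinf}. You merely unfold that generic principle into its two concrete halves (induction on $\finfinsymb$ for one inclusion, coinduction on $\inffinsymb$ for the other), which is the same argument at a lower level of abstraction.
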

\begin{proof}
  See the appendix.
\end{proof}

As a corollary, we obtain $\inffinsymb=\inffinmemsymb$ as sets of B{\"o}hm forests.

Now we introduce two predicates on expressions of $\olfsfix$ which will allow to characterize type finiteness, with the following intuitive meanings:
\begin{enumerate}
	\item $\FFp P T$: there are only finitely many finite members of $T$ (the case of no finite members is included in this formulation);
	\item $\NFFp P T$: there are infinitely many finite members of $T$.
\end{enumerate}
Here, the predicate $P$ on sequents controls the case of fixpoint
variables, as before for $\nbinfpsymb P$ and $\binfpsymb P$. The
general proviso on $P$ is that it is decidable and that for all
sequents $\sigma$, $P(\sigma)$ implies $\finfin{\solfunction \sigma}$, i.\,e., $P\subseteq\finfinsymb\circ\solletter$.
For our main result, it will be sufficient to take $P:=\emptyset$.
In view of the decidability result of the previous section,
another possibility of choosing the predicate would be with $P:=\allinfsymb\circ\solletter$,
i.\,e., with the negation of the predicate underlying the definition of $\nbinfcansymb$ and $\binfcansymb$.\footnote{For
this specific setting of $P$, we could easily establish $\NFFpsymb P\subseteq\nbinfcansymb$ or, equivalently,
$\binfcansymb\subseteq\FFpsymb P$, by induction. This would allow 
to remove the condition $\nbinfcan{N_j}$ from the tuple rule for $\NFFpsymb P$.}

The definitions of these predicates are inductive, and they are
presented in Fig.~\ref{fig:FF}. Analogously to the predicates
$\nbinfpsymb P$ and $\binfpsymb P$, it is clear that they could
equivalently be defined recursively over the term structure, thus
ensuring their decidability, thanks to decidability of
$\nbinfcansymb$.

	\begin{lemma}\label{lem:FF-NFF}
		For all $T\in\olfsfix$, $\NFFp P T$ iff $\FFp P T$ does not hold.
	\end{lemma}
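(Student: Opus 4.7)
The approach mirrors the strategy used for Lemmas~\ref{lem:exfin-allinf}, \ref{lem:binf-nbinf} and \ref{lem:finfin-inffin}: since the text explicitly notes that $\FFpsymb P$ and $\NFFpsymb P$ can equivalently be given by structural recursion on the term, the natural route is induction on $T\in\olfsfix$, checking at each term constructor that the defining clauses of $\FFpsymb P$ and $\NFFpsymb P$ are Boolean negations of each other once the induction hypothesis and the already-established dualities are plugged in.

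First I would dispatch the base case $T=X^\sigma$. The defining clauses here use the predicate $P$ directly, so the equivalence $\NFFp P{X^\sigma}\Leftrightarrow\neg\FFp P{X^\sigma}$ reduces to $\neg P(\sigma)\Leftrightarrow\neg P(\sigma)$, which is immediate from the general proviso that $P$ is decidable (so that $P(\sigma)$ and $\neg P(\sigma)$ is a genuine dichotomy). The $\lambda$-abstraction case $T=\lambda x^A.N$ and the fixpoint case $T=\gfp X^\sigma.\sum_iE_i$ are handled by the induction hypothesis applied to $N$ and to each $E_i$ respectively; the logical shape of the rules (a single premise in the $\lambda$-case, a $\forall i$ premise for $\FFpsymb P$ matched by a $\exists j$ premise for $\NFFpsymb P$ in the $\gfp$-case) is exactly such that negation passes cleanly through the connectives.

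The tuple case $T=x\tuple{N_i}_i$ will be the main obstacle, because by analogy with $\finfinsymb$/$\inffinsymb$ (Fig.~\ref{fig:finfin}) it combines two alternatives: either some component $N_j$ contributes no finite members at all (which on the finitary side must be witnessed by $\binfcansymb$, the sharp decidable analogue of $\allinfsymb$ established in Lemma~\ref{prop:allinf-sharp}) or all components contribute only finitely many, while $\NFFpsymb P$ requires that every $N_i$ has some finite member (detected by $\nbinfcansymb$) and that some $N_j$ has infinitely many. The equivalence at this node therefore unfolds into a propositional tautology over $i$ whose atoms are $\FFp P{N_i}$, $\NFFp P{N_i}$, $\binfcan{N_i}$, $\nbinfcan{N_i}$; it is resolved by combining the induction hypothesis on each $N_i$ with Lemma~\ref{lem:binf-nbinf} instantiated at $P:=\exfinsymb\circ\solletter$, which gives $\nbinfcan{N_i}\Leftrightarrow\neg\binfcan{N_i}$.

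Once the tuple case is settled, all cases of the induction are closed, and the equivalence holds for every $T\in\olfsfix$. As a byproduct, because $\NFFpsymb P$ is (by construction) a syntax-directed recursive predicate whose leaves invoke only the decidable predicates $P$ and $\nbinfcansymb$, the lemma confirms that deciding $\FFp P T$ reduces algorithmically to deciding $\NFFp P T$, which is the payoff relied upon in the remainder of Section~\ref{sec:finiteness}.
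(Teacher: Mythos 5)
Your proof is correct and follows essentially the same route as the paper, which simply states ``routine induction on $T$, using Lemma~\ref{lem:binf-nbinf}''; your elaboration of the tuple case (De Morgan on the two $\FFpsymb P$-rules versus the single $\NFFpsymb P$-rule, closed off by the induction hypothesis together with $\nbinfcan{N_i}\Leftrightarrow\neg\binfcan{N_i}$ from Lemma~\ref{lem:binf-nbinf}) is exactly the intended content. The only cosmetic remark is that the base case needs no appeal to decidability of $P$ — the equivalence there is the tautology $\neg P(\sigma)\Leftrightarrow\neg P(\sigma)$ regardless.
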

	\begin{proof}
		Routine induction on $T$, using Lemma~\ref{lem:binf-nbinf}.
	\end{proof}

\begin{figure}[tb]
	\caption{$\FFpsymb P$ predicate and $\NFFpsymb P$ predicate}
	\label{fig:FF}
	$$
	\begin{array}{c}
        \infer[]{\FFp P{ X^\sigma}}{P(\sigma)}\quad\quad
	\infer[]{\FFp P{\lambda x^A.N}}{\FFp P N}\quad\quad
	\infer[]{\FFp P{\gfp X^\sigma.\sum_iE_i}}{\forall i,\,\FFp P{E_i}}\quad\quad
	\\[2ex]
	\infer[]{\FFp P{x\tuple{N_i}_i}}{\forall i,\,\FFp P{N_i}}\qquad
	\infer[]{\FFp P{x\tuple{N_i}_i}}{\binfcan{N_j}}\quad\qquad\mbox{and}\quad\qquad
	\infer[]{\NFFp P{x\tuple{N_i}_i}}{
		\NFFp P{N_j}\quad\forall i,
		\, \nbinfcan{N_i}}\\[2ex]
         \infer[]{\NFFp P{ X^\sigma}}{\neg P(\sigma)}\quad\quad
	\infer[]{\NFFp P{\lambda x^A.N}}{\NFFp P N}\quad\quad
        \infer[]{\NFFp P{\gfp X^\sigma.\sum_iE_i}}{\NFFp P{E_j}}

	\end{array}
	$$
\end{figure}

\begin{proposition}[Finitary characterization]\label{prop:FF-and-NFF-correct}
Let $P$ satisfy $P\subseteq\finfinsymb\circ\solletter$ (this is part of the general proviso on $P$).
\begin{enumerate}
\item If $\FFp P T$ then $\finfin{\interps T}$.\label{prop:FFimpliesfinfin}
\item Let $T\in\olfsfix$ be well-bound and proper. 
      If $\NFFp P T$ and for all $X^\sigma\in\FPV(T)$, $\finfin{\solfunction\sigma}$ implies $P(\sigma)$,
      then $\inffin{\interps T}$.\label{prop:NFFimpliesinffin}
\end{enumerate}
\end{proposition}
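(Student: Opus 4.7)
Both items are proved by structural induction on $T$, equivalently by induction on the inductively-defined predicate ($\FFpsymb P$ for (1), $\NFFpsymb P$ for (2)), following the scheme of Proposition~\ref{prop:allinf}. Throughout, the plan relies on the base equalities $\interps{X^\sigma}=\solfunction\sigma$ and $\interps{\gfp X^\sigma.\sum_iE_i}=\sum_i\interps{E_i}$ of the simplified semantics, and on the sharp characterisations of $\nbinfcansymb$ (Lemma~\ref{prop:allinf-sharp}) and of its negation $\binfcansymb$ (Lemma~\ref{lem:binf-nbinf}) in terms of $\exfinsymb$ and $\allinfsymb$ applied to $\interps\cdot$.

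For Part (1), each of the five rules of $\FFpsymb P$ in Fig.~\ref{fig:FF} is matched to a corresponding rule of $\finfinsymb$ in Fig.~\ref{fig:finfin}: the $X^\sigma$-rule uses the proviso $P\subseteq\finfinsymb\circ\solletter$; the $\lambda$-, $\gfp$- and the first tuple-rules only unfold the simplified semantics and invoke the induction hypothesis; the remaining tuple-rule (premise $\binfcan{N_j}$) is rephrased as $\allinf{\interps{N_j}}$ via Lemmas~\ref{prop:allinf-sharp}, \ref{lem:binf-nbinf} and \ref{lem:exfin-allinf}, after which the fourth $\finfinsymb$-rule applies.

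For Part (2), the variable case combines the FPV hypothesis with Lemma~\ref{lem:finfin-inffin} to turn $\neg P(\sigma)$ into $\inffin{\solfunction\sigma}$. The $\lambda$-case uses the induction hypothesis to get $\inffin{\interps N}$ and derives the side-premise $\exfin{\interps N}$ from the semantic fact that an infinite set of finite members is non-empty (through the characterisation Lemmas~\ref{lem:allinfok} and \ref{lem:finfinok}). The tuple-case applies the induction hypothesis to the distinguished $N_j$ and translates each co-premise $\nbinfcan{N_i}$ to $\exfin{\interps{N_i}}$ via Lemma~\ref{prop:allinf-sharp}, then closes with the tuple-rule of $\inffinsymb$. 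The delicate case is the fixpoint case $T=\gfp X^\sigma.\sum_iE_i$: properness together with the simplified semantics gives $\interps T=\sum_i\interps{E_i}$, and the sum-rule of $\inffinsymb$ reduces the goal to $\inffin{\interps{E_j}}$ for some $j$ witnessing $\NFFp P{E_j}$; however, a direct appeal to the induction hypothesis is blocked because $X^\sigma$ may occur free in $E_j$ and the FPV-condition need not hold at $\sigma$. Mirroring Proposition~\ref{prop:allinf}.\ref{prop:allinf.2}, I plan to resolve this through an inner coinduction on $\inffinsymb$: one defines an invariant $R$ containing $\interps{E_j}$ and consisting of interpretations of proper, well-bound subexpressions of $E_j$ (with the FPV-condition relaxed to allow the outer $X^\sigma$), together with all their co-contractions $[\sigma'/\sigma](\cdot)$ needed to interpret occurrences $X^{\sigma'}$ with $\sigma\leq\sigma'$ via Lemma~\ref{lem:solextension}; one then shows that $R$ is a post-fixed point of the $\inffinsymb$-operator. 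The closure argument requires that both $\exfinsymb$ and $\inffinsymb$ are preserved under co-contraction---this is the analogue for $\inffinsymb$ of the co-contraction invariance of $\exfinsymb$ already used in Proposition~\ref{prop:allinf}.\ref{prop:allinf.2}, and will have to be established as an auxiliary lemma by coinduction on the relevant predicates.

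The main obstacle is exactly this fixpoint case of Part (2), where the inductive syntax of $\NFFpsymb P$ must be bridged with the coinductive semantics of $\inffinsymb$ in the presence of a freshly bound fixpoint variable. The combination of properness, the sharp decidability result for $\nbinfcansymb$, the inner coinduction, and the co-contraction invariance of $\exfinsymb$ and $\inffinsymb$ is what makes this work; all other cases are routine unfoldings of the respective syntax-directed rules.
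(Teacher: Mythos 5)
Your proposal is correct and follows essentially the same route as the paper: induction on the defining predicate, with the fixpoint case of Part~(2) resolved by properness, an embedded coinduction on $\inffinsymb$, Lemma~\ref{lem:solextension}, and the co-contraction invariance of $\inffinsymb$ (which the paper also isolates as an auxiliary lemma, the analogue for $\finfinsymb/\inffinsymb$ of the one used for Prop.~\ref{prop:allinf}). The only cosmetic difference is that you spell out the coinduction as an explicit post-fixed-point invariant and derive the guard $\exfin{\interps N}$ in the $\lambda$-case semantically rather than from $\inffinsymb\subseteq\exfinsymb$; both are sound.
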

\begin{proof}
  Both statements are proven by induction on $T$ (or, equivalently, by induction on the respective predicate in the premiss).
  While \ref{prop:FFimpliesfinfin}. is straightforward, for \ref{prop:NFFimpliesinffin}.
  the case relative to fixpoints is based on $T$ being proper and needs an
          inner co-induction and also the fact that $\finfinsymb$ is
          invariant under co-contraction. For details (on both parts), see the appendix.
\end{proof}

With these preparations in place, the problem $\finhabproblem$ can be solved in the same way as $\inhabproblem$.

\begin{theorem}[Decidability of type finiteness in $\ol$]\label{thm:decide-finhab-finite}\quad
	\begin{enumerate}
		\item \label{thm:decide-finhab-finite.1}
                  Let $P$ satisfy $P\subseteq\finfinsymb\circ\solletter$ (this is part of the general proviso on $P$).
                  For any $T\in\olfsfix$  well-bound, proper and closed,  $\FFp P T$ iff  $\finfin{\interps T}$.
                \item $\finfin{\solfunction\sigma}$ is decidable, by deciding $\FFp \emptyset{\finrepempty\sigma}$.\label{thm:decide-finhab-finite.2}
		\item In other words, $\finhabproblem$ is decidable.\label{thm:decide-finhab-finite.3}
	\end{enumerate}
\end{theorem}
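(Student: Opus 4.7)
The plan is to follow exactly the three-part template used in the proof of Theorem~\ref{thm:decide-fin-inhab}, with the roles of the pair $(\nbinfpsymb P,\binfpsymb P)$ and of $(\exfinsymb,\allinfsymb)$ replaced by $(\FFpsymb P,\NFFpsymb P)$ and $(\finfinsymb,\inffinsymb)$, respectively. Having established the finitary characterization in Prop.~\ref{prop:FF-and-NFF-correct} and the duality Lemmas~\ref{lem:finfin-inffin} and \ref{lem:FF-NFF}, the three items reduce almost entirely to bookkeeping; I expect no substantial obstacle, since the real work was done in Prop.~\ref{prop:FF-and-NFF-correct}.

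For part~\ref{thm:decide-finhab-finite.1}, I would prove both directions of the iff separately. The forward direction $\FFp P T\Rightarrow\finfin{\interps T}$ is immediate from Prop.~\ref{prop:FF-and-NFF-correct}.\ref{prop:FFimpliesfinfin} and does not even need $T$ to be closed. For the converse I would argue by contraposition: if $\FFp P T$ fails, Lemma~\ref{lem:FF-NFF} gives $\NFFp P T$; because $T$ is closed, $\FPV(T)=\emptyset$, so the side condition of Prop.~\ref{prop:FF-and-NFF-correct}.\ref{prop:NFFimpliesinffin} holds vacuously; using $T$ being well-bound and proper, that proposition delivers $\inffin{\interps T}$, which by Lemma~\ref{lem:finfin-inffin} contradicts $\finfin{\interps T}$.

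For part~\ref{thm:decide-finhab-finite.2}, I would take $P:=\emptyset$, which trivially satisfies the proviso $P\subseteq\finfinsymb\circ\solletter$. The predicate $\FFp\emptyset{\cdot}$ is decidable because its rules are syntax-directed over the inductive grammar of $\olfsfix$ and the only external predicate they invoke is $\binfcansymb$, which is decidable by the definition following Theorem~\ref{thm:decide-fin-inhab}. By Theorem~\ref{thm:FullProp-simpl}, $\finrepempty\sigma$ is proper and $\interps{\finrepempty\sigma}=\solfunction\sigma$, and it is closed and well-bound by construction. Instantiating part~\ref{thm:decide-finhab-finite.1} with $T:=\finrepempty\sigma$ therefore yields $\FFp\emptyset{\finrepempty\sigma}$ iff $\finfin{\solfunction\sigma}$, so the latter is decided by computing $\finrepempty\sigma$ (effective) and then running the decision procedure for $\FFp\emptyset{\cdot}$.

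For part~\ref{thm:decide-finhab-finite.3}, I would chain Lemma~\ref{lemma:char-fin-inhab} with the set equality $\finfinmemsymb=\finfinsymb$ from Lemma~\ref{lem:finfinok} to rewrite ``$\inhab\Gamma A$ is finite'' equivalently as $\finfin{\interpwe{\finrepempty{\seq\Gamma A}}}$, and then invoke Corollary~\ref{cor:simplequal} together with Theorem~\ref{thm:FullProp-simpl} to replace this by $\finfin{\solfunction{\seq\Gamma A}}$, which is decidable by part~\ref{thm:decide-finhab-finite.2}. All the technical content has already been absorbed into Prop.~\ref{prop:FF-and-NFF-correct} and the identifications between the membership-based predicates and their syntactic counterparts, so this last theorem is merely their assembly, parallel to the $\inhabproblem$ case.
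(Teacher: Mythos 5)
Your proposal is correct and follows essentially the same route as the paper: part~\ref{thm:decide-finhab-finite.1} from both halves of Prop.~\ref{prop:FF-and-NFF-correct} together with Lemmas~\ref{lem:finfin-inffin} and \ref{lem:FF-NFF} and the vacuous side condition for closed terms, part~\ref{thm:decide-finhab-finite.2} by instantiating with $P:=\emptyset$ and Theorem~\ref{thm:FullProp-simpl}, and part~\ref{thm:decide-finhab-finite.3} by reducing finiteness of $\inhab\Gamma A$ to $\finfin{\solfunction{\seq\Gamma A}}$ via $\finfinmemsymb=\finfinsymb$. The only cosmetic difference is that in part~\ref{thm:decide-finhab-finite.3} you route through Lemma~\ref{lemma:char-fin-inhab} and Corollary~\ref{cor:simplequal} whereas the paper applies Prop.~\ref{prop:properties-of-S}.\ref{prop:properties-of-S.3} directly, which amounts to the same argument.
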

\begin{proof} \ref{thm:decide-finhab-finite.1}. Follows from both parts of Prop.~\ref{prop:FF-and-NFF-correct},
  Lemmas~\ref{lem:finfin-inffin} and \ref{lem:FF-NFF}, and the
  fact that, trivially, the extra condition in Prop.~\ref{prop:FF-and-NFF-correct}.\ref{prop:NFFimpliesinffin} is satisfied for closed
  terms.
	
   \ref{thm:decide-finhab-finite.2}. Apply \ref{thm:decide-finhab-finite.1}.~with both parts of Theorem~\ref{thm:FullProp-simpl}.

   \ref{thm:decide-finhab-finite.3}.
   Analogously to the proof of Lemma~\ref{lemma:char-fin-inhab}, apply
   Prop.~\ref{prop:properties-of-S}.\ref{prop:properties-of-S.3}~(and
   $\finfinmemsymb=\finfinsymb$).
\end{proof}

\begin{definition}
  Let the predicates $\FFcansymb$ and $\NFFcansymb$ on
  $\olfsfix$ be defined by $\FFcansymb:=\FFpsymb P$ and
  $\NFFcansymb:=\NFFpsymb P$ for $P:=\finfinsymb\circ\solletter$,
  which satisfies the proviso by
  Theorem~\ref{thm:decide-finhab-finite}.\ref{thm:decide-finhab-finite.2}. In particular,
  $\FFcansymb$ and $\NFFcansymb$ are decidable.
\end{definition}

Prop.~\ref{prop:FF-and-NFF-correct}.\ref{prop:NFFimpliesinffin} gives that $\NFFcan T$ implies $\inffin{\interps T}$ for all well-bound and proper expressions $T$.
Again (as for Lemma~\ref{prop:allinf-sharp}), an inspection of the proof of that proposition
even shows that the latter two properties are not needed:

\begin{lemma}[Sharp finitary characterization]\label{prop:inffin-sharp}
  For all $T\in\olfsfix$, $\FFcan T$ iff  $\finfin{\interps T}$.
\end{lemma}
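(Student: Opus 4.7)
The plan is to split the biconditional into its two directions and exploit the sharp choice $P := \finfinsymb\circ\solletter$ defining $\FFcansymb$ and $\NFFcansymb$, which makes both hypotheses on $T$ (well-boundedness and properness) superfluous.

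For the forward direction, I would simply invoke Prop.~\ref{prop:FF-and-NFF-correct}.\ref{prop:FFimpliesfinfin} with $P := \finfinsymb\circ\solletter$: that part of the proposition carries no well-bound/proper hypothesis, and the proviso $P \subseteq \finfinsymb\circ\solletter$ holds trivially (with equality). So $\FFcan T$ yields $\finfin{\interps T}$ for any $T$.

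For the backward direction, I would argue by contraposition: if $\neg\FFcan T$ then $\NFFcan T$ by Lemma~\ref{lem:FF-NFF}; showing $\inffin{\interps T}$ then gives $\neg\finfin{\interps T}$ by Lemma~\ref{lem:finfin-inffin}. Thus the core of the proof is $\NFFcan T \Rightarrow \inffin{\interps T}$ for all $T$, which I would establish by induction on $T$ (equivalently, on $\NFFcansymb$). The cases: (i) $T = X^\sigma$: $\NFFcan{X^\sigma}$ means $\neg\finfin{\solfunction\sigma}$, hence $\inffin{\solfunction\sigma}$ by Lemma~\ref{lem:finfin-inffin}, and $\interps{X^\sigma} = \solfunction\sigma$ by definition of the simplified semantics. (ii) $T = \lambda x^A.N$: the induction hypothesis yields $\inffin{\interps N}$; the containment $\inffinsymb \subseteq \exfinsymb$ (a trivial semantic fact, transported through Lemmas~\ref{lem:allinfok}, \ref{lem:exfin-allinf} and \ref{lem:finfin-inffin}) also gives $\exfin{\interps N}$, so the lambda rule of $\inffinsymb$ applies. (iii) $T = \gfp X^\sigma.\sum_iE_i$: from $\NFFcan{E_j}$ for some $j$, induction gives $\inffin{\interps{E_j}}$, and since $\interps T = \sum_i\interps{E_i}$ by the simplified semantics, the sum rule of $\inffinsymb$ finishes. (iv) $T = x\tuple{N_i}_i$: one extracts $\NFFcan{N_j}$ and $\nbinfcan{N_i}$ for all $i$; induction yields $\inffin{\interps{N_j}}$, and each $\nbinfcan{N_i}$ is converted to $\exfin{\interps{N_i}}$ through Lemma~\ref{prop:allinf-sharp} (the analogous sharp characterization for emptiness), so the tuple rule of $\inffinsymb$ applies.

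The potentially tricky point, as for Lemma~\ref{prop:allinf-sharp}, is case (iii): in the proof of Prop.~\ref{prop:FF-and-NFF-correct}.\ref{prop:NFFimpliesinffin} the fixpoint case required properness in order to identify $\interps{\gfp X^\sigma.\sum_iE_i}$ with $\solfunction\sigma$ and then to run an inner coinduction involving co-contraction. The observation that removes this obstacle is that the simplified semantics collapses the fixpoint construct to the plain sum $\sum_i\interps{E_i}$, so the $\NFFpsymb P$ rule for $\gfp$ lines up directly with the $\inffinsymb$ rule for sums via the induction hypothesis, with no reference to $\solfunction\sigma$ and thus no need for properness. Similarly, the sharp base case at fixpoint variables makes the side condition on $\FPV(T)$ in Prop.~\ref{prop:FF-and-NFF-correct}.\ref{prop:NFFimpliesinffin} automatic, so the induction closes unconditionally for all $T \in \olfsfix$.
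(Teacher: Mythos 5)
Your proof is correct and is exactly the argument the paper intends: the forward direction is Prop.~\ref{prop:FF-and-NFF-correct}.\ref{prop:FFimpliesfinfin} (which never used well-boundedness or properness), and the backward direction redoes the induction for $\NFFcansymb\Rightarrow\inffinsymb\circ\interps{\cdot}$, observing that the sharp parameter $P:=\finfinsymb\circ\solletter$ makes the $\FPV$ side condition vacuous and that the simplified semantics of $\gfp$ as a plain sum removes the need for properness and the inner coinduction. The paper does not spell this proof out (it only points to the analogous Lemma~\ref{prop:allinf-sharp}), but your case analysis, including the use of $\inffinsymb\subseteq\exfinsymb$ in the $\lambda$-case and of Lemma~\ref{prop:allinf-sharp} in the tuple case, is precisely the intended completion.
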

In particular, $\finfin{\interps T}$ is decidable, by deciding $\FFcan T$.


\section{Counting normal inhabitants}\label{sec:counting}

The method of the preceding section is not confined to the mere decision problems. In particular, instead of only deciding $\finhabproblem$, the finitely many inhabitants can be effectively obtained. We will illustrate this with some detail for the somehow more basic question of determining their number. The function for obtaining the set of inhabitants then follows the same pattern.

\smallskip
We have considered B\"ohm forests throughout the paper modulo idempotence of
the summation operation (among other identifications). This does not
hinder us from counting the number of finite members in case it is
finite. The finite members themselves are ``concrete'', and the only
identification that is not expressed in the grammar of $\ol$ is
$\alpha$-equivalence. However, we would prefer counting summand-wise
and thus need to be sure that finite members do not belong to more
than one summand in a sum, and this by taking into account that
occurrences are identified up to bisimulation. Technically, this
desideratum is achieved by considering a subset of B\"ohm forests that
we call \emph{head-variable controlled}. The set $\coolfsh$ of
head-variable controlled B\"ohm forests is obtained by the same grammar
of terms and elimination alternatives as $\coolfs$, but with the
restriction for the formation of $\sum_iE_i$ with
$E_i=x_i\tuple{N^i_j}_j$ that the $x_i$ are pairwise different,
i.\,e., no variable is head of two summands in one sum, and this
recursively throughout the B\"ohm forest. If we consider this
restriction in our view of sums as sets of elimination alternatives, this only
means that a given head variable cannot appear with two distinct
tuples of arguments but still can appear multiply. So, in order to
profit from the extra property of B\"ohm forests in $\coolfsh$, we
regard sums as functions from a finite set of (head) variables $x$
into finite tuples of B\"ohm forest headed by $x$ and use the associated
notion of bisimilarity (modulo $\alpha$-equivalence). This means, when
we speak about head-variable controlled B\"ohm forests, we not only
consider B\"ohm forests satisfying this extra property, but also their
presentation in this form that takes profit from it. This change of view does not
change the notion of bisimilarity. Notice that $\solfunction\sigma$ and $\finrepempty\sigma$ always yield
head-variable controlled terms, in the respective term systems.

We define the counting function $\cntsymb$ for head-variable
controlled B\"ohm forests in $\finfinsymb$ only, by recursion on
$\finfinsymb$.
\begin{definition}[Infinitary counting function $\cntsymb:\coolfsh\cap\finfinsymb\to\tnat$]\label{def:infcnt}
$$
\begin{array}{rcl}
\cnt{\lambda x^A.N}&:=&\left\{ 
\begin{array}{ll}
0& \mbox{if}\;\allinf N \\
\cnt{N}& \mbox{else}\;
\end{array}
\right.\\
\cnt{\sum_iE_i}&:=& \sum_i\; \cnt{E_i}\\
\cnt{x\tuple{N_i}_i}&:=&\left\{ 
\begin{array}{ll}
0& \mbox{if}\;\exists j,\,\allinf{N_j} \\
\prod_i\; \cnt{N_i}& \mbox{else}\;
\end{array}
\right.
\end{array}
$$
\end{definition}

\begin{lemma}\label{lem:allinf-cnt-null}
   Let $T\in\coolfsh$. If $\allinf T$ (in particular, $\finfin
T$) then $\cnt T=0$.
\end{lemma}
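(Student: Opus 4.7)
The plan is to proceed by induction on the finite derivation of $\finfin T$, on which $\cntsymb$ is defined by recursion. First I would observe that $\cnt T$ is in fact well-defined under the hypothesis $\allinf T$, because $\allinfsymb\subseteq\finfinsymb$ was already remarked in the paper; so $\allinf T$ indeed guarantees $\finfin T$ and thus entry into the domain of $\cntsymb$.

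In each case, the strategy is to apply inversion on the coinductive rules for $\allinfsymb$ at the head shape of $T$. For $T=\lambda x^A.N$, inversion on $\allinf{\lambda x^A.N}$ yields $\allinf N$, so the first branch of Def.~\ref{def:infcnt} fires and immediately gives $\cnt T=0$, regardless of which of the two $\finfinsymb$-rules for the lambda case justifies the recursion. For $T=x\tuple{N_i}_i$, inversion on $\allinf{x\tuple{N_i}_i}$ supplies some index $j$ with $\allinf{N_j}$, so the first branch of the tuple clause gives $\cnt T=0$, again regardless of which $\finfinsymb$-rule is used. For $T=\sum_iE_i$ (including the empty sum $\oo$, which yields $0$ directly), inversion on $\allinfsymb$ gives $\allinf{E_i}$ for every $i$; the induction hypothesis applied to the strictly smaller $\finfinsymb$-derivation of each $E_i$ yields $\cnt{E_i}=0$, whence $\cnt T=\sum_i\cnt{E_i}=0$.

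I do not foresee any real obstacle; the proof is a routine induction. The only subtlety worth flagging is that genuine use of the induction hypothesis is required only in the sum case: in the lambda and tuple cases, inversion on the coinductively defined $\allinfsymb$ is by itself strong enough to trigger the ``zero'' branch of $\cntsymb$, so the seemingly recursive ``else'' branch in Def.~\ref{def:infcnt} is never actually taken under the hypothesis $\allinf T$.
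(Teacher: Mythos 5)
Your proof is correct and is essentially the paper's argument: the paper likewise notes that induction on $T$ or on the coinductive $\allinfsymb$ is unavailable and proceeds by case analysis on the head shape, using inversion on $\allinfsymb$ to fire the zero branches of $\cntsymb$ and using that the summands of a sum are tuples. Wrapping this in an induction on the (finite) derivation of $\finfin T$ is legitimate but, as you yourself observe, nearly vacuous---even in the sum case the induction hypothesis can be replaced by directly decomposing each $E_i$ as a tuple $x\tuple{N_j}_j$ and invoking the tuple reasoning, which is exactly the paper's one-step case analysis.
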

\begin{proof}
  Neither induction on $T$ nor on $\allinfsymb$ are available. The proof is by case analysis, where 
  one has to use that elimination alternatives are tuples.
\end{proof}
While this lemma might allow to remove the case
distinction in the $\lambda$-abstraction case, the second branch of
the tuple case would replace the first one only with a very non-strict
reading of the product that would have to be defined and be of value $0$
as soon as one of the factors is $0$.

The following lemma can be considered a refinement of the soundness part of Lemma~\ref{lem:finfinok}.
\begin{lemma}\label{lem:cntok}
  Let $T$ be a head-variable controlled B\"ohm forest such that $\finfin
  T$. Then, $\cnt T$ is a well-defined natural number, and it is the
  cardinality of $\finext T$.
\end{lemma}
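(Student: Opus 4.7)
The plan is to prove both statements simultaneously by induction on the derivation of $\finfin T$ (equivalently, by structural recursion on $T$, since that is how $\cntsymb$ is defined). Well-definedness of $\cnt T$ as a natural number falls out of the recursion terminating, provided the recursive calls land on subterms with $\finfin{\cdot}$ established. I would check this by cases: for $\lambda x^A.N$ the recursive call is only taken when not $\allinf N$, in which case among the two rules producing $\finfin{\lambda x^A.N}$ the second one must have been applicable, hence $\finfin N$ holds; for $\sum_iE_i$ all $\finfin{E_i}$ hold by the unique rule; for $x\tuple{N_i}_i$, when no $N_j$ satisfies $\allinf{N_j}$, only the second rule for tuples can have been used, so all $\finfin{N_i}$ hold. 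Note that the two branches of the $\lambda$-case need not be mutually exclusive: if both $\allinf N$ and $\finfin N$ hold for a head-variable controlled $N$, Lemma~\ref{lem:allinf-cnt-null} guarantees $\cnt N = 0$, so the two defining branches agree.

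For correctness, i.e.\ $\cnt T = |\finext T|$, I would exhibit a bijection in each case. In the $\lambda$-case, $\finext{\lambda x^A.N}$ is in obvious bijection with $\finext N$ via stripping/reattaching $\lambda x^A$; if $\allinf N$ then Lemma~\ref{lem:allinfok} yields $\finext N = \emptyset$, matching the value $0$, otherwise the induction hypothesis gives $\cnt N = |\finext N|$. In the tuple case, if some $\allinf{N_j}$ holds then no finite member exists (a finite member $x\tuple{t_i}_i$ would force $t_j \in \finext{N_j} = \emptyset$); otherwise the map $(t_i)_i \mapsto x\tuple{t_i}_i$ gives a bijection $\prod_i \finext{N_i} \to \finext{x\tuple{N_i}_i}$, and the induction hypothesis together with finiteness of each $\finext{N_i}$ yields the product formula.

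The delicate case is the sum $\sum_i E_i$: the set $\finext{\sum_iE_i}$ is the union $\bigcup_i \finext{E_i}$ (immediate from the rules of Fig.~\ref{fig:collect}), and by the induction hypothesis each summand contributes $\cnt{E_i} = |\finext{E_i}|$, so it suffices to argue that this union is disjoint. This is precisely where the head-variable controlled hypothesis pays off: each $E_i$ has the form $x_i\tuple{N^i_j}_j$ with pairwise distinct head variables $x_i$, hence any finite member of $\finext{E_i}$ has head variable $x_i$ and cannot lie in $\finext{E_{i'}}$ for $i \neq i'$. This disjointness argument is the main obstacle, in the sense that it is the only place where the hypothesis $T \in \coolfsh$ is actually used, and without it idempotence identifications in sums would invalidate the sum-of-counts formula.
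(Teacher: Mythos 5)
Your proof is correct and follows the same route as the paper's (which is only a three-sentence sketch): recursion on $\finfinsymb$, with Lemma~\ref{lem:allinfok} supplying the zero cases and the head-variable-controlled presentation of sums supplying disjointness of the union $\finext{\sum_iE_i}=\bigcup_i\finext{E_i}$. Your elaboration of why each recursive call is licensed by a strictly smaller derivation of $\finfinsymb$, and of the bijections underlying the product and sum formulas, is exactly the content the paper leaves implicit.
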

\begin{proof}
  Notice that the clause for sums of elimination alternatives is
  subject to the presentation we convened for elements of $\coolfsh$,
  and thus the value is invariant under our identifications. The
  recursive calls to $\cntsymb$ occur only with B\"ohm forests that
  enter $\finfinsymb$ ``earlier''. Being the correct number depends on
  Lemma~\ref{lem:allinfok}.
\end{proof}

Since we have also considered the elements of $\olfsfix$ throughout
the paper modulo idempotence of the summation operation, we will
analogously introduce the set $\olfsfixh$ of head-variable controlled
elements. Again, this is not only a subset but comes with a different
presentation of sums as functions from a finite set of (head)
variables $x$ into finite tuples of finitary terms headed by
$x$. 

\begin{definition}[Finitary counting function $\cntsymb:\olfsfixh\to\tnat$] Define by recursion over the term structure
$$
\begin{array}{rcl}
\cnt{X^\sigma}&:=&0\\
\cnt{\lambda x^A.N}&:=&
\cnt{N}\\
\cnt{\gfp X^\sigma.\sum_iE_i}&:=&
\sum_i\; \cnt{E_i}\\
\cnt{x\tuple{N_i}_i}&:=&
\prod_i\; \cnt{N_i}
\end{array}
$$
\end{definition}

\begin{lemma}\label{lem:binf-cnt-null}
  Let $T\in\olfsfixh\cap\binfcansymb$. Then $\cnt T=0$.
\end{lemma}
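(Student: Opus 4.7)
The plan is a straightforward structural induction on $T$ (equivalently, rule induction on the derivation of $\binfcan T$), using the recursive definition of $\cntsymb$ on $\olfsfixh$ in parallel with the inductive clauses for $\binfpsymb P$ with $P := \exfinsymb\circ\solletter$.

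I would walk through the four cases matching the shape of $T$. In the fixpoint-variable case $T = X^\sigma$, the counting function gives $\cnt{X^\sigma} = 0$ by definition, so nothing more is needed (the hypothesis $\binfcan{X^\sigma}$ is not even used). In the abstraction case $T = \lambda x^A.N$, the rule for $\binfcansymb$ on $\lambda$-abstractions yields $\binfcan N$, so by induction hypothesis $\cnt N = 0$, whence $\cnt{\lambda x^A.N} = \cnt N = 0$. In the $\gfpsymb$ case $T = \gfp X^\sigma.\sum_i E_i$, inversion of the $\binfcansymb$ rule gives $\binfcan{E_i}$ for all $i$, so by the induction hypothesis $\cnt{E_i}=0$ for all $i$, and therefore $\cnt T = \sum_i \cnt{E_i} = 0$.

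The tuple case $T = x\tuple{N_i}_i$ is the only slightly interesting one. The $\binfcansymb$ rule for tuples requires the existence of some $j$ such that $\binfcan{N_j}$; note that this forces the tuple to be nonempty, so the corresponding product $\prod_i \cnt{N_i}$ has at least one factor. By the induction hypothesis applied to $N_j$, we get $\cnt{N_j}=0$, so the product contains a zero factor and thus $\cnt{x\tuple{N_i}_i} = \prod_i \cnt{N_i} = 0$.

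No real obstacle is expected: the rules of $\binfpsymb P$ are perfectly aligned, in arity and quantifier structure, with the clauses of $\cntsymb$ on $\olfsfixh$ (universal on tuples turns into a product, existential on tuples turns into the "some factor zero" witness, etc.), and head-variable controlledness plays no role here since we are already working summand-wise by the definition of $\cntsymb$. The only point worth flagging in writing, to avoid any misreading, is the nonemptiness of the tuple in the $x\tuple{N_i}_i$ case, which is automatic from the $\binfcansymb$-rule requiring a witness index $j$.
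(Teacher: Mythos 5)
Your proof is correct and follows exactly the paper's own argument: a routine induction over $\binfcansymb$ (equivalently over $T$), with the fixpoint-variable case trivial by definition of $\cntsymb$, the sum case giving all summands zero, and the tuple case giving one zero factor. The remark about nonemptiness of the tuple (forced by the witness $j$ in the $\binfcansymb$-rule) is a correct and worthwhile clarification that the paper leaves implicit.
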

\begin{proof}
  Obvious induction, see the appendix.
\end{proof}

\begin{proposition}\label{lem:cntsemantics}
Let $P\subseteq\allinfsymb\circ\solletter$ and $T\in\olfsfixh\cap\FFpsymb P$. Then $\cnt T=\cnt{\interps T}$.
\end{proposition}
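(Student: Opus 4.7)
The plan is to proceed by induction on $T$ (equivalently, by induction on the derivation of $\FFp P T$). Before starting the cases, I would check that both sides are well-defined natural numbers. The finitary $\cnt T$ is well-defined by its recursive definition on $\olfsfixh$. For the infinitary side, I would note that the assumption $P\subseteq\allinfsymb\circ\solletter$ together with $\allinfsymb\subseteq\finfinsymb$ (observed after Fig.~\ref{fig:finfin}) yields $P\subseteq\finfinsymb\circ\solletter$, hence Prop.~\ref{prop:FF-and-NFF-correct}.\ref{prop:FFimpliesfinfin} gives $\finfin{\interps T}$; head-variable control is preserved by the simplified interpretation, since in the variable case $\interps{X^\sigma}=\solfunction\sigma$ is head-variable controlled and in all other cases head variables of elimination alternatives are copied from $T$. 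Thus $\interps T\in\coolfsh\cap\finfinsymb$ and $\cnt{\interps T}$ is defined.

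The two straightforward cases are $T=X^\sigma$ and $T=\gfp X^\sigma.\sum_iE_i$. In the variable case, $\FFp P{X^\sigma}$ forces $P(\sigma)$, hence $\allinf{\solfunction\sigma}$ by hypothesis, so Lemma~\ref{lem:allinf-cnt-null} gives $\cnt{\interps{X^\sigma}}=\cnt{\solfunction\sigma}=0=\cnt{X^\sigma}$. In the $\gfp$-case the defining equations of both counting functions match and the equality follows by applying the induction hypothesis to each $E_i$.

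The abstraction and tuple cases require reconciling the case splits in the infinitary $\cntsymb$ with the case-free finitary $\cntsymb$; the key bridge is the sharp characterization Lemma~\ref{prop:allinf-sharp} ($\binfcan N$ iff $\allinf{\interps N}$), combined with Lemma~\ref{lem:binf-cnt-null} ($\binfcan N$ implies $\cnt N=0$). For $T=\lambda x^A.N$ I would split on whether $\allinf{\interps N}$ holds: if yes, then $\binfcan N$ by Lemma~\ref{prop:allinf-sharp}, so $\cnt N=0$ by Lemma~\ref{lem:binf-cnt-null}, and both sides are $0$; otherwise $\cnt{\interps{\lambda x^A.N}}=\cnt{\interps N}$ and this equals $\cnt N=\cnt{\lambda x^A.N}$ by the induction hypothesis. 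For $T=x\tuple{N_i}_i$ derived by the rule demanding $\FFp P{N_i}$ for all $i$, an analogous split on the existence of some $j$ with $\allinf{\interps{N_j}}$ either makes both products vanish (via Lemmas~\ref{prop:allinf-sharp} and \ref{lem:binf-cnt-null}) or lets them agree factor-by-factor via the induction hypothesis. For the alternative tuple rule, where $\FFp P T$ follows from $\binfcan{N_j}$ for some $j$, Lemma~\ref{prop:allinf-sharp} gives $\allinf{\interps{N_j}}$ so $\cnt{\interps T}=0$, while Lemma~\ref{lem:binf-cnt-null} gives $\cnt{N_j}=0$ and thus $\cnt T=\prod_i\cnt{N_i}=0$.

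The main obstacle is precisely this mismatch between the extensional, semantics-inspecting case distinctions of the infinitary $\cntsymb$ and the purely syntactic, unconditional recursion of the finitary $\cntsymb$. Lemma~\ref{prop:allinf-sharp} is what makes the alignment painless: it transports a semantic $\allinfsymb$-condition on $\interps{N_j}$ into a syntactic $\binfcansymb$-condition on $N_j$, and Lemma~\ref{lem:binf-cnt-null} then collapses the corresponding finitary factor (or summand, in the abstraction case) to zero in perfect correspondence with the infinitary side. No coinductive argument is needed, in contrast with the proofs of correctness for $\FFpsymb P$ and $\NFFpsymb P$, because here we are only comparing two well-defined natural numbers.
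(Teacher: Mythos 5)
Your proposal is correct and follows essentially the same route as the paper: induction on $T$ (equivalently on $\FFpsymb P$), with Lemma~\ref{lem:allinf-cnt-null} handling the variable case, Lemma~\ref{prop:allinf-sharp} bridging the semantic case splits of the infinitary $\cntsymb$ with the syntactic side, and Lemma~\ref{lem:binf-cnt-null} collapsing the corresponding finitary factors to zero. The only differences are cosmetic (the paper organizes the abstraction case by showing $R=\cnt{\interps N}$ in both branches before one application of the induction hypothesis, and splits the tuple case by which $\FFpsymb P$-rule was used rather than by a semantic condition), plus your explicit well-definedness check, which the paper leaves implicit.
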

\begin{proof}
  The proof is by induction on $T$ (equivalently, by induction on $\FFpsymb P$),
  using Lemma~\ref{lem:binf-cnt-null} for the last rule of $\FFpsymb P$, see the appendix.
\end{proof}

\begin{theorem}[Counting theorem]\label{thm:counting}
Let $P\subseteq\allinfsymb\circ\solletter$ (e.\,g., $P=\emptyset$).
If $\FFp P{\finrepempty\sigma}$ then $\cnt{\finrepempty\sigma}$ is the cardinality of $\finext{\solfunction\sigma}$.
\end{theorem}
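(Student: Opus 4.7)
The plan is to chain together the results already assembled in Section~\ref{sec:counting} with Theorem~\ref{thm:FullProp-simpl}, since the theorem is essentially a confluence of three previously established facts: the semantic coincidence of $\interps{\finrepempty\sigma}$ with $\solfunction\sigma$, the bridge between the finitary and infinitary counting functions, and the correctness of the infinitary counting function on head-variable controlled B\"ohm forests in $\finfinsymb$.

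First, I would verify that the hypotheses for all the relevant lemmas are met. The term $\finrepempty\sigma$ is closed, well-bound, and proper by Theorem~\ref{thm:FullProp-simpl}.\ref{thm:FullProp-simpl.1}, and it is head-variable controlled (as explicitly noted in the text just before Definition~\ref{def:infcnt}). Moreover, the assumption $P\subseteq\allinfsymb\circ\solletter$ implies $P\subseteq\finfinsymb\circ\solletter$, because $\allinfsymb\subseteq\finfinsymb$ (this inclusion is stated and used around Lemma~\ref{lem:finfinok}); hence the proviso for Proposition~\ref{prop:FF-and-NFF-correct}.\ref{prop:FFimpliesfinfin} is satisfied, and from the hypothesis $\FFp P{\finrepempty\sigma}$ I can deduce $\finfin{\interps{\finrepempty\sigma}}$. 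Invoking Theorem~\ref{thm:FullProp-simpl}.\ref{thm:FullProp-simpl.2}, which gives $\interps{\finrepempty\sigma}=\solfunction\sigma$, this becomes $\finfin{\solfunction\sigma}$.

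Next I apply Proposition~\ref{lem:cntsemantics} to $\finrepempty\sigma$, which is head-variable controlled and in $\FFpsymb P$, obtaining the syntactic--semantic identity $\cnt{\finrepempty\sigma}=\cnt{\interps{\finrepempty\sigma}}$. Using Theorem~\ref{thm:FullProp-simpl}.\ref{thm:FullProp-simpl.2} once more, the right-hand side equals $\cnt{\solfunction\sigma}$. Finally, since $\solfunction\sigma$ is head-variable controlled and $\finfin{\solfunction\sigma}$ has just been shown, Lemma~\ref{lem:cntok} tells me that $\cnt{\solfunction\sigma}$ is the cardinality of $\finext{\solfunction\sigma}$. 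Composing the equalities delivers the claim.

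There is no real obstacle in this proof: all the conceptual difficulties have been absorbed into the previously proven results. The only point requiring minor attention is the monotonicity of the parameter~$P$ when moving from Proposition~\ref{lem:cntsemantics} (which needs $P\subseteq\allinfsymb\circ\solletter$, and which we already have by assumption) to Proposition~\ref{prop:FF-and-NFF-correct}.\ref{prop:FFimpliesfinfin} (which only needs the weaker $P\subseteq\finfinsymb\circ\solletter$), and the observation that $\finrepempty\sigma$ simultaneously fulfils all the structural prerequisites (closed, well-bound, proper, head-variable controlled). The whole proof therefore reduces to a short chain of rewrites, with the instantiation $P=\emptyset$ as the canonical example making the statement effective.
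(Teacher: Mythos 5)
Your proposal is correct and follows essentially the same route as the paper's own proof: Proposition~\ref{lem:cntsemantics} plus Theorem~\ref{thm:FullProp-simpl} for the equality $\cnt{\finrepempty\sigma}=\cnt{\solfunction\sigma}$, then Proposition~\ref{prop:FF-and-NFF-correct}.\ref{prop:FFimpliesfinfin} and Lemma~\ref{lem:cntok} for the cardinality claim. Your explicit check that $P\subseteq\allinfsymb\circ\solletter$ entails $P\subseteq\finfinsymb\circ\solletter$ via $\allinfsymb\subseteq\finfinsymb$ is a detail the paper leaves implicit, but it is not a different argument.
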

\begin{proof}
  $\finrepempty\sigma\in\olfsfixh$. By the
  preceding proposition, using the assumption that
  $\FFp P{\finrepempty\sigma}$, we obtain
  $\cnt{\finrepempty\sigma}=\cnt{\interps{\finrepempty\sigma}}$,
  which is $\cnt{\solfunction\sigma}$ by
  Theorem~\ref{thm:FullProp-simpl}. Thanks to
  Proposition~\ref{prop:FF-and-NFF-correct}.\ref{prop:FFimpliesfinfin},
  $\finfin{\solfunction\sigma}$, hence, by Lemma~\ref{lem:cntok},
  $\cnt{\solfunction\sigma}$ is the cardinality of
  $\finext{\solfunction\sigma}$.
\end{proof}
Notice that when $\FFp P{\finrepempty\sigma}$ does not hold, then
$\cnt{\finrepempty\sigma}$ is meaningless, but
$\NFFp P{\finrepempty\sigma}$ holds, and thus,
$\inffin{\solfunction\sigma}$, which ensures an infinite number of
finite solutions of $\sigma$.

Without any extra effort, we can give an effective definition of the associated set of finite inhabitants through a function $\setoffinmemsymb:\olfsfixh\to\powerfin{\ol}$ by
$$
\begin{array}{rcl}
\setoffinmem{X^\sigma}&:=&\emptyset\\
\setoffinmem{\lambda x^A.N}&:=&
\{\lambda x^A.t\mid t\in\setoffinmem{N}\}\\
\setoffinmem{\gfp X^\sigma.\sum_iE_i}&:=&
\cup_i\;\setoffinmem{E_i}\\
\setoffinmem{x\tuple{N_i}_i}&:=&\{x\tuple{t_i}_i\mid \forall i,\,t_i\in\setoffinmem{N_i}\}
\end{array}
$$
Then, for $T\in \olfsfixh$, $\cnt T$ is the cardinality of $\setoffinmem T$ (notice that the set union in the $\gfp$ case is always a disjoint union), and if $\FFp\emptyset{\finrepempty{\seq\Gamma A}}$ then $\inhab\Gamma A=\setoffinmem{\finrepempty{\seq\Gamma A}}$. If not, $\inhab\Gamma A$ is infinite.


\section{Final Remarks}\label{sec:final}

This paper illustrates a methodology to address decidability problems in the simply-typed $\lambda$-calculus which starts by computing a $\lambda$-term (through function $\finrepsymb$) representing the full set of inhabitants of a given type (using an extension of $\lambda$-calculus designed previously by the authors), and then uses that $\lambda$-term to decide the problem at hand.

To carry out this program, we had to introduce our simplified
semantics that is loose in the sense that it does not guarantee that
the interpretation of formal fixed-point constructs indeed denotes a
fixed point. This loose semantics can be analyzed very smoothly, and we
also identified the notion of a proper expression where the simplified
semantics agrees on formal fixed-point constructs with the intended
semantics in terms of solution spaces. Our finitary representation
function generates proper expressions, and so we can apply the
simplified semantics to solve the original problems.

The predicates with which we analyze the finitary expressions
representing sets of inhabitants are parameterized by a predicate on
sequents for the case of fixpoint variables. The interesting point
about our use of this parameter is that, in order to establish
decidability, we choose it very simply (as the empty set), but once we
obtained decidability, we can in turn use that predicate as parameter
when building further definitions. In the end, we only need two
instances, but we consider it important---not only in the interest of
succinctness---to have identified this abstraction.

\medskip
We do not claim that our method can confirm sharp complexity results, namely ${\sf PSPACE}$-completeness of $\inhabproblem$~\cite{Statman79} and $\finhabproblem$~\cite{Hirokawa98}.
We are rather interested in having a simple representation of the \emph{full} sets of inhabitants, which may have multiple uses, as illustrated by our counting functions. By ``full'' we mean in particular that we capture all $\eta$-long and $\beta$-normal terms. The restriction to $\eta$-long terms is very convenient for a concise description and does not do much harm to the usability of the results. The concept of \emph{co-contraction} (Def.~\ref{def:co-cont-forests}) is crucial for completeness of our method in this respect, and as shows our paper, it is not intrusive in practice, i.\,e., for the analysis carried out in this paper, its presence is hardly noticed in the proofs.

Note that other approaches dealing with a full set of inhabitants also face questions. 
For example, in~\cite{TakahashiAH96}, although (finite) context-free grammars suffice to capture inhabitants obeying the \emph{total discharge convention}, an infinite grammar is used to capture the full set of $\beta$-normal forms. In~\cite{BrodaD05} (Sect. 4.3) a method is presented to produce a context-free grammar to generate the long normal forms of a type, but the produced grammars seem again to be unable to stay within the abovementioned optimal complexity.
Also, in~\cite{SchubertDB15}, as a goal is to achieve machines capable of enumerating all normal inhabitants, and for this, storing a fixed finite number of bound variable names is not sufficient, automata with a non-standard form of register are used. 

We believe our compositional methodology of first building a $\lambda$-term (more precisely, a closed well-bound term in $\olfsfix$) representing the full set of inhabitants of interest, and then traversing that $\lambda$-term to decide whether a given property of that set holds, can be transferred to other contexts. For example, it would be interesting to know if in the presence of a connective like disjunction, our methodology produces a (simple) decision function for the $\inhabproblem$ problem.



\newpage
\appendix
\section{Proofs}
\noindent
	{\bf Lemma
	\ref{lem:exfin-allinf}.}
\emph{Given a B\"{o}hm forest $T$, $\exfin T$ iff $\allinf T$ does not hold.}
\begin{proof}
  This is plainly an instance of the generic result in the style of
  De Morgan's laws that presents
  inductive predicates as complements of coinductive predicates, by a
  dualization operation on the underlying clauses. The principle is
  recalled with details now.

  Assume a set $U$ (the
  ``universe'') and a function $F:\power U\to\power U$ that is monotone,
  i.\,e., for $\Mcal\subseteq \Ncal\subseteq U$, one has $F(\Mcal)\subseteq
  F(\Ncal)$. Then, by Tarski's fixed-point theorem, there exist the least
  fixed-point $\mu F$ and the greatest fixed-point $\nu F$ of $F$, with
  respect to set inclusion. Moreover, $\mu F$ is the intersection of all
  pre-fixed points $\Mcal\subseteq U$ of $F$, i.\,e., with $F(\Mcal)\subseteq
  \Mcal$, and $\nu F$ is the union of all post-fixed points $\Mcal\subseteq U$
  of $F$, i.\,e., with $\Mcal\subseteq F(\Mcal)$. This lattice-theoretic duality
  allows to relate both concepts through complements, with $\Mcal^\complement:=U\setminus \Mcal$.  Given $F$ as
  before, define a monotone function $F^\dualsymb:\power U\to\power U$
  by setting $F^\dualsymb(\Mcal):=(F(\Mcal^\complement))^\complement$. Then,
  $$\mu F=(\nu (F^\dualsymb))^\complement\enspace.$$ This formula (written in
  logical terms with negation in place of set complement) is often used
  to \emph{define} $\mu F$, e.\,g., in $\mu$-calculus. For a proof, it
  suffices to consider the inclusion from left to right (the other
  direction is obtained by duality, using $(F^\dualsymb)^\dualsymb=F$).
  Since the left-hand side
  is included in every pre-fixed point of $F$, it suffices to
  show that the right-hand side is such a pre-fixed point, i.\,e.,
  $F((\nu (F^\dualsymb))^\complement)\subseteq(\nu (F^\dualsymb))^\complement$. We
  show the contrapositive $\nu (F^\dualsymb)\subseteq F^\dualsymb(\nu (F^\dualsymb))$ (using
  $F^\dualsymb$ as abbreviation): but $\nu(F^\dualsymb)$ is a post-fixed point itself (it is even a
  fixed point).
\end{proof}
\begin{lemma}\label{lem:invcocontraction}
Let $P\in\{\exfinsymb,\allinfsymb\}$ and $\sigma\leq\sigma'$. Then for all B\"ohm forests $T$, we have $P(T)$ iff $P([\sigma'/\sigma]T)$.
\end{lemma}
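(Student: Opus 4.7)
The plan is to establish the equivalence first for $P = \allinfsymb$ by coinduction in both directions, and then immediately obtain the case $P = \exfinsymb$ by complementation via Lemma~\ref{lem:exfin-allinf}.

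For the forward direction $\allinf T \Rightarrow \allinf{[\sigma'/\sigma]T}$, I coinduct on the invariant
$$\mathcal{M} := \{[\sigma'/\sigma]T \mid \allinf T\}\ \cup\ \{w\tuple{[\sigma'/\sigma]N_i}_i \mid z\in\dom(\Gamma),\ (w:\Gamma(z))\in\Delta_z,\ \allinf{z\tuple{N_i}_i}\}\enspace,$$
checking that $\mathcal{M}$ is closed under the rules defining $\allinfsymb$. All cases but one are handled by routine inversion on $\allinfsymb T$. The exception is $T = z\tuple{N_i}_i$ with $z\in\dom(\Gamma)$: here co-contraction produces a sum whose summands live in the second component of $\mathcal{M}$, and each such summand $w\tuple{[\sigma'/\sigma]N_i}_i$ satisfies the tuple rule for $\allinfsymb$ by reusing the same index $j$ with $\allinf{N_j}$ that witnesses $\allinf{z\tuple{N_i}_i}$, since $[\sigma'/\sigma]N_j$ belongs to the first component of $\mathcal{M}$.

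For the converse $\allinf{[\sigma'/\sigma]T} \Rightarrow \allinf T$ the simpler invariant $\{T \mid \allinf{[\sigma'/\sigma]T}\}$ suffices. The three cases on the form of $T$ are handled by inversion on $\allinfsymb$ applied to $[\sigma'/\sigma]T$; the only subtle one is $T = z\tuple{N_i}_i$ with $z\in\dom(\Gamma)$, where $[\sigma'/\sigma]T$ is a sum. Inversion on the sum rule gives $\allinf$ of every summand, including the one headed by $z$ (present since $(z:\Gamma(z))\in\Delta_z$), and a further inversion on the tuple rule yields some $j$ with $\allinf{[\sigma'/\sigma]N_j}$, placing $N_j$ in the invariant and matching the tuple rule for $T$. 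The case $P = \exfinsymb$ then follows by complementation: $\exfin T$ iff $\neg\allinf T$ iff $\neg\allinf{[\sigma'/\sigma]T}$ iff $\exfin{[\sigma'/\sigma]T}$, using Lemma~\ref{lem:exfin-allinf} at the two extremities.

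The main obstacle is the asymmetry in the forward direction when $z\in\dom(\Gamma)$: the summand $z\tuple{[\sigma'/\sigma]N_i}_i$ inserted by co-contraction is not of the form $[\sigma'/\sigma]T'$ for any $T'$, since any candidate preimage with head $z$ would itself co-contract to the entire sum rather than to this individual summand. This forces the coinductive invariant to be explicitly enlarged with these intermediate expressions. The backward direction escapes this difficulty because the summand headed by $z$ is always present in the co-contracted sum and immediately available for inversion back to the structure of $T$.
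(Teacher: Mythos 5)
Your proof is correct. Note that the paper states Lemma~\ref{lem:invcocontraction} in the appendix \emph{without} proof---it is treated as routine---so there is no official argument to compare against; what you have written is a legitimate filling-in of that gap. You correctly isolate the one genuinely non-mechanical point: in the direction $\allinf T\Rightarrow\allinf{[\sigma'/\sigma]T}$, the summand $z\tuple{[\sigma'/\sigma]N_i}_i$ produced by the last clause of co-contraction is not in the image of $[\sigma'/\sigma]$, so the coinductive invariant must be enlarged, and your second component does exactly that. (A small refinement: for $w\neq z$ the summand $w\tuple{[\sigma'/\sigma]N_i}_i$ \emph{is} equal to $[\sigma'/\sigma](w\tuple{N_i}_i)$, since $w\in\dom(\Gamma')\setminus\dom(\Gamma)$, and $\allinfsymb$ of that preimage follows from $\allinf{z\tuple{N_i}_i}$; so strictly only the $w=z$ summand forces the enlargement---but covering all of $\Delta_z$ in the second component is harmless and arguably cleaner.) The reduction of the $\exfinsymb$ case to the $\allinfsymb$ case via Lemma~\ref{lem:exfin-allinf} is sound and non-circular, the alternative being a direct induction on the derivation of $\exfin T$ resp.\ $\exfin{[\sigma'/\sigma]T}$, which would work equally well.
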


\noindent
{\bf Lemma \ref{lem:binf-nbinf}.}
\emph{For all $T\in\olfsfix$, $\binf T$ iff  $\nbinf T$ does not hold.}

	\begin{proof}
		Routine induction on $T$. In terms of the equivalent recursive
		definitions of the predicates, this would have been just an application
		of De Morgan's laws.
	\end{proof}

\noindent
{\bf Proposition \ref{prop:allinf} (Finitary characterization).}
\emph{Let $P$ satisfy $P\subseteq\exfinsymb\circ\solletter$ (this is part of the general proviso on $P$).	
                \begin{enumerate}
                \item If $\nbinfp P T$ then $\exfin{\interps T}$.
                \item Let $T\in\olfsfix$ be well-bound and proper.  If
                  $\binfp P T$ and for all $X^\sigma\in\FPV(T)$, $\exfin{\solfunction\sigma}$ implies $P(\sigma)$,
                  then $\allinf{\interps T}$.
 		\end{enumerate}
}
       \begin{proof}
                \ref{prop:allinf.1}. is sketched in the main part of the paper.

		\ref{prop:allinf.2}. is proved by induction on the predicate $\binfpsymb P$
                 (which can also be seen as a proof by induction on $T$).

		Case $T=X^\sigma$. Then $\neg P(\sigma)$, hence, since $X^\sigma\in\FPV(T)$, by contraposition and
                Lemma~\ref{lem:exfin-allinf}, we get  $\allinf{\solfunction\sigma}$.
		
		Case $T=\gfp X^\sigma.\sum_iE_i$. Let $N:=\interps T=\sum_i \interps{E_i}$. As $T$ is proper, $N=\solfunction{\sigma}$.
                We hence have to show $\allinf{\solfunction{\sigma}}$, which we do by an embedded coinduction
                for the coinductively defined predicate $\allinfsymb$.
                We have $\binfp P{E_i}$ for all $i$ and want to use the induction hypothesis,
                which would give us $\allinf{\interps{E_i}}$ and thus $\allinf{\sum_i \interps{E_i}}$, which was our goal.
                Fix an $i$. Of course, $E_i$ is also well-bound and proper. We have to consider all $Y^{\sigma'}\in\FPV(E_i)$. Either
                $Y^{\sigma'}\in\FPV(T)$, and we are fine by hypothesis, or $Y=X$ and, since $T$ is well-bound, $\sigma\leq\sigma'$.
                We just show that $\exfin{\solfunction{\sigma'}}$ does not hold: from our coinductive hypothesis $\allinf{\solfunction{\sigma}}$,
                we get through Lemma~\ref{lem:solextension} and Lemma~\ref{lem:invcocontraction} even $\allinf{\solfunction{\sigma'}}$,
                and this is the negation of $\exfin{\solfunction{\sigma'}}$. This is a proper application of the coinductive hypothesis
                since it enters a lemma on $\allinfsymb$ that does not change needed observation depths and then goes into an elimination alternative,
                where the occurrences of free fixpoint variables are at least ``guarded'' by an ordinary variable of a tuple.

                The other cases are simple applications of the induction hypothesis.

	\end{proof}

\noindent
{\bf Lemma \ref{prop:allinf-sharp} (Sharp finitary characterization).}
\emph{For all $T\in\olfsfix$, $\nbinfcan T$ iff  $\exfin{\interps T}$.}
\begin{proof}
  In view of the previous proposition, we only need to consider the
  direction from right to left, and we prove its contraposition
  $\binfcan T$ implies $\allinf{\interps T}$ by induction on the predicate $\binfcansymb$.

  Case $T=X^\sigma$. Then $\neg\exfin{\solfunction\sigma}$ by hypothesis of this case, and this is $\allinf{\interps{X^\sigma}}$.

  Case $T=\gfp X^\sigma.\sum_iE_i$. Then $\interps T=\sum_i \interps{E_i}$.
                We have $\binf{E_i}$ for all $i$ and we use the induction hypothesis,
                which gives us $\allinf{\interps{E_i}}$ for all $i$ and thus $\allinf{\sum_i \interps{E_i}}$, which was our goal.
                Notice that this reasoning does not need further properties of $T$.

  The other cases are likewise simple applications of the induction hypothesis.
\end{proof}

\noindent
\textbf{Lemma \ref{lem:finfin-inffin}.}
\emph{Given a B\"{o}hm forest $T$, $\finfin T$ iff $\inffin T$ does not hold.}
\begin{proof}
  $\inffinsymb$ is defined from $\finfinsymb$ by the De Morgan's law (as
  recalled in the proof of Lemma~\ref{lem:exfin-allinf}). In the first
  clause for $\inffinsymb$, the proviso $\exfin N$ is necessary for soundness, and as
  well the proviso $\exfin{N_j}$ (with $i=j$) in the last clause. Only
  through these guards we can ensure that
  $\inffinsymb\subseteq\exfinsymb$, which is a minimum requirement
  given what they say in terms of finite membership. Otherwise, the
  first clause would allow to derive $\inffin N$ for the infinite
  $\lambda$-abstraction, satisfying the equation $N=\lambda x^A.N$ for
  any choice of $A$ and without any relevance of the variable
  $x$. Similarly for the third clause with $\nu N.x\tuple N$.
\end{proof}

\begin{lemma}\label{lem:invcocontraction1}
	Let $P\in\{\finfinsymb,\inffinsymb\}$ and $\Gamma\leq\Gamma'$. Then for all B\"ohm forests $T$, we have $P(T)$ iff $P([\Gamma'/\Gamma]T)$.
\end{lemma}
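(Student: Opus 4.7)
The plan is to follow the same pattern as Lemma~\ref{lem:invcocontraction} but lifted from the extremal predicates $\exfinsymb, \allinfsymb$ to the quantitative predicates $\finfinsymb, \inffinsymb$. The economical route is to establish only the two forward implications $\finfin T \Rightarrow \finfin{[\Gamma'/\Gamma]T}$ and $\inffin T \Rightarrow \inffin{[\Gamma'/\Gamma]T}$; the backward directions then follow by contraposition via Lemma~\ref{lem:finfin-inffin}. I would handle the first by induction on the derivation of $\finfin T$, and the second by coinduction on $\inffinsymb$, invoking Lemma~\ref{lem:invcocontraction} whenever an $\exfinsymb$- or $\allinfsymb$-premiss of an $\inffinsymb$- or $\finfinsymb$-clause needs to be transferred across co-contraction.

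For forward $\finfinsymb$, the $\lambda$-abstraction and sum rules are immediate from the induction hypothesis, combined with Lemma~\ref{lem:invcocontraction} for the $\allinfsymb$-premiss of the first $\lambda$-rule. The substantive cases are the two tuple rules for $T = x\langle N_j\rangle_j$. If $x \notin \dom(\Gamma)$, co-contraction preserves the head and the same tuple rule of $\finfinsymb$ applies verbatim to $x\langle [\Gamma'/\Gamma]N_j\rangle_j$. If $x \in \dom(\Gamma)$, co-contraction yields $\sum_{w \in \Delta_x} w\langle [\Gamma'/\Gamma]N_j\rangle_j$; since the argument tuple $[\Gamma'/\Gamma]N_j$ is shared across all these summands and the premisses of the tuple rules of $\finfinsymb$ do not mention the head variable, the same inductive reasoning yields $\finfin{w\langle [\Gamma'/\Gamma]N_j\rangle_j}$ for every $w \in \Delta_x$, whence the sum rule of $\finfinsymb$ concludes.

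For forward $\inffinsymb$, I would use coinduction with candidate post-fixed point $R := \{[\Gamma'/\Gamma]T : \inffin T\}$, verifying closure under each of the three rules of $\inffinsymb$ by case analysis on $T$. The $\lambda$-abstraction case uses Lemma~\ref{lem:invcocontraction} on the $\exfinsymb$ side-condition. For sums $T = \sum_i E_i$ one inverts $\inffin T$ to find an index $i_0$ with $\inffin{E_{i_0}}$ and picks an appropriate summand of $[\Gamma'/\Gamma]T$: either $[\Gamma'/\Gamma]E_{i_0}$ itself when the head of $E_{i_0}$ lies outside $\dom(\Gamma)$, or the summand indexed by $w = z_{i_0}$ in the co-contraction expansion otherwise. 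Inversion of $\inffin{E_{i_0}}$ then supplies the data needed by Lemma~\ref{lem:invcocontraction} on the $\exfinsymb$-side-conditions and by the coinductive hypothesis at the witness argument position singled out by that inversion. The pure tuple case is a direct specialisation of this reasoning.

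The main obstacle is the bookkeeping around sums in the co-contraction: a single tuple with head in $\dom(\Gamma)$ expands into many parallel summands, and those can get merged with a surrounding sum. What rescues the argument is the uniformity already noted---all the parallel summands generated from one tuple share the same argument forests $[\Gamma'/\Gamma]N_j$---together with Lemma~\ref{lem:invcocontraction} handling exactly the $\exfinsymb/\allinfsymb$-conditions that guard the premisses of the $\finfinsymb/\inffinsymb$-rules, so that nothing beyond the existing machinery is needed.
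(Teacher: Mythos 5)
The paper states Lemma~\ref{lem:invcocontraction1} in the appendix without any proof (just as it does for its companion Lemma~\ref{lem:invcocontraction}), so there is no official argument to compare yours against; I can only assess your proposal on its own terms. Your overall strategy is sound and fits the paper's style: proving only the two forward implications and recovering both converses by contraposition through Lemma~\ref{lem:finfin-inffin} is a legitimate economy, and your induction for $\finfinsymb$ is correct, including the observation that when a head in $\dom(\Gamma)$ expands into a sum, all the new summands share the same co-contracted argument tuple, so the flattened sum still has all its elimination alternatives in $\finfinsymb$.

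The one place that needs tightening is the coinduction for $\inffinsymb$. Your candidate $R:=\{[\Gamma'/\Gamma]T : \inffin T\}$ is not literally a post-fixed point: in the sum case (and in the tuple case with head in $\dom(\Gamma)$) the summand you select, $w\fl{j}{[\Gamma'/\Gamma]N_j}$, is in general not of the form $[\Gamma'/\Gamma]E$ for any $E$ with $\inffin E$, since $[\Gamma'/\Gamma]$ applied to the original tuple produces the whole sum, not the individual renamed-head summand. What you actually do is unfold the clauses twice---pick the summand and immediately verify the tuple clause for it, discharging the argument positions via $R$---which amounts to showing $R\subseteq F(F(R))$ rather than $R\subseteq F(R)$. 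This is repairable either by invoking that principle explicitly (it is sound for monotone $F$, since $R\cup F(R)$ is then a post-fixed point) or, more simply, by enlarging the candidate to $R\cup\{w\fl{j}{[\Gamma'/\Gamma]N_j} : \inffin{z\fl{j}{N_j}},\ z\in\dom(\Gamma),\ (w:\Gamma(z))\in\Delta_z\}$, for which closure under all three clauses is then immediate from your case analysis and Lemma~\ref{lem:invcocontraction}. With that adjustment the proof is complete.
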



\noindent
\textbf{Proposition \ref{prop:FF-and-NFF-correct}. (Finitary characterization)}
\emph{Let $P$ satisfy $P\subseteq\finfinsymb\circ\solletter$ (this is part of the general proviso on $P$).
\begin{enumerate}
\item If $\FFp P T$ then $\finfin{\interps T}$.
\item Let $T\in\olfsfix$ be well-bound and proper. 
      If $\NFFp P T$ and for all $X^\sigma\in\FPV(T)$, $\finfin{\solfunction\sigma}$ implies $P(\sigma)$,
      then $\inffin{\interps T}$.
\end{enumerate}}

\begin{proof} \ref{prop:FFimpliesfinfin}. By induction on $\FFpsymb P$ (or equivalently by structural induction on $T$).
              We only show the tuple cases with $T={x\tuple{N_i}_i}$. The other cases are equally simple.

Case for some $j$, $\binfcan{N_j}$. By Lemma~\ref{prop:allinf-sharp}, $\allinf{\interps{N_j}}$, hence $\finfin{x\tuple{\interps{N_i}}_i}$,
which is $\finfin{\interps T}$.

Case for all $i$, $\FFp P{N_i}$. By induction hypothesis, $\finfin{\interps{N_i}}$ for all $i$, hence  $\finfin{\interps{x\tuple{N_i}_i}}$.
	
\ref{prop:NFFimpliesinffin}. By induction on $\FFpsymb P$ (or equivalently by structural induction on $T$).

Case $T=X^\sigma$. Then $\neg P(\sigma)$, hence, since $X^\sigma\in\FPV(T)$, by contraposition and
                Lemma~\ref{lem:finfin-inffin}, we get  $\inffin{\solfunction\sigma}$.

Case $T={x\tuple{N_i}_i}$. For some $j$,
	$\NFFp P{N_j}$ and, for all $i$,
	$\nbinfcan{N_i}$. The induction hypothesis is applicable for $N_j$ since $\FPV(N_j)\subseteq\FPV(T)$.
        Therefore, we have $\inffin{\interps{N_j}}$. By Lemma~\ref{prop:allinf-sharp}, $\exfin{\interps{N_i}}$,
        for all $i$, hence, we are done by definition of $\inffinsymb$.

Case $T=\gfp X^\sigma.\sum_iE_i$. For some $j$, $\NFFp P{E_j}$. Let $N:=\interps T=\sum_i \interps{E_i}$.
                As $T$ is proper, $N=\solfunction{\sigma}$.
                We hence have to show $\inffin{\solfunction{\sigma}}$, which we do by an embedded coinduction
                for the coinductively defined predicate $\inffinsymb$.
                We want to use the induction hypothesis for $E_j$,
                which would give us $\inffin{\interps{E_j}}$ and thus $\inffin{\sum_i \interps{E_i}}$, which was our goal.
                Of course, $E_j$ is also well-bound and proper. We have to consider all $Y^{\sigma'}\in\FPV(E_j)$. Either
                $Y^{\sigma'}\in\FPV(T)$, and we are fine by hypothesis, or $Y=X$ and, since $T$ is well-bound, $\sigma\leq\sigma'$.
                We just show that $\finfin{\solfunction{\sigma'}}$ does not hold: from our coinductive hypothesis $\inffin{\solfunction{\sigma}}$,
                we get through Lemma~\ref{lem:solextension} and Lemma~\ref{lem:invcocontraction1} even $\inffin{\solfunction{\sigma'}}$,
                and this is the negation of $\finfin{\solfunction{\sigma'}}$. This is a proper application of the coinductive hypothesis
                since it enters a lemma on $\inffinsymb$ that does not change needed observation depths and then goes into an elimination alternative,
                where the occurrences of free fixpoint variables are at least ``guarded'' by an ordinary variable of a tuple.

The case of $\lambda$-abstractions is a simple application of the induction hypothesis.
\end{proof}

                We remark that the proposition and its proof are
                rather analogous to Prop.~\ref{prop:allinf} than dual to it, although
                the logical structure of the predicates is rather
                dual: to enter a fixed point into $\FFpsymb P$, all of
                the elimination alternatives have to be there already,
                while for $\nbinfpsymb P$, only one of the elimination
                alternatives is required. However, this duality is
                broken for the tuples: while for $\nbinfpsymb P$, all
                arguments are required to be in the same predicate,
                $\FFpsymb P$ has a rule that asks only about one
                argument, but for a different predicate, and there is
                even a second possibility. Anyway, the proof structure
                needs to be analogous since $\exfinsymb$ and
                $\finfinsymb$ are both inductively defined and
                therefore do not admit reasoning by coinduction.

\medskip
\noindent
\textbf{Lemma \ref{lem:binf-cnt-null}.}
\emph{Let $T\in\olfsfixh\cap\binfcansymb$. Then $\cnt T=0$.}
\begin{proof}
  Induction over $\binfcansymb$ (or, equivalently, over $T$).

  Case $T=X^\sigma$. Trivial.

  Case $T=\lambda x^A.N$. Trivial by induction hypothesis.
  
  Case $T={x\tuple{N_i}_i}$. By induction hypothesis, one of the factors is $0$.

  Case $T=\gfp X^\sigma.\sum_iE_i$. By induction hypothesis, all summands are $0$.
\end{proof}

\medskip
\noindent
\textbf{Proposition \ref{lem:cntsemantics}.}
\emph{Let $P\subseteq\allinfsymb\circ\solletter$ and $T\in\olfsfixh\cap\FFpsymb P$. Then $\cnt T=\cnt{\interps T}$.}
\begin{proof}
  We will write $L$ and $R$ for left-hand side and right-hand side of the equation to prove. The proof is by induction on $T$
  (or, equivalently, by induction on $\FFpsymb P$).

  Case $T=X^\sigma$. Then $\allinf{\solfunction\sigma}$, hence $\cnt{\solfunction\sigma}=0$ by Lemma~\ref{lem:allinf-cnt-null}.
  Hence, $R=0=L$.

  Case $T=\lambda x^A.N$. Then $\FFp PN$. $L=\cnt N$. $R=\cnt{\lambda x^A.\interps N}$. According to the definition of $R$, we have to 
  distinguish if $\allinf{\interps N}$ or not. In the first case, by
  Lemma~\ref{lem:allinf-cnt-null}, we have $\cnt{\interps N}=0$. Thus, in both case, this gives $R=\cnt{\interps N}$,
  while $L=\cnt N$. Done by induction hypothesis.
  
  Case $T={x\tuple{N_i}_i}$. Subcase $\binfcan{N_j}$ for some $j$. By Lemma~\ref{prop:allinf-sharp}, $\allinf{\interps{N_j}}$.
  Hence, $R=0$. By Lemma~\ref{lem:binf-cnt-null}, $\cnt{N_j}=0$, hence also $L=0$ (since one factor is $0$).

  Subcase $\FFp P{N_i}$ for all $i$. We may assume that we are not in the first subcase that has already been treated, hence $\nbinfcan{N_i}$ for all $i$.
  By Lemma~\ref{prop:allinf-sharp}, $\neg\allinf{\interps{N_i}}$ for all $i$. Therefore, $R=\prod_i\; \cnt{\interps{N_i}}$, while $L=\prod_i\; \cnt{N_i}$.
  Done by induction hypothesis for all $i$. 

  Case $T=\gfp X^\sigma.\sum_iE_i$. Then $\FFp P{E_i}$ for all $i$. Just apply the induction hypothesis to all the summands and sum up.
  (Notice how this case becomes the simplest one in our setting with simplified semantics.)
\end{proof}


\end{document}